\DeclareMathOperator*{\esssup}{\mathrm{ess\,sup}}
\newtheorem{theorem}{Theorem}[section]
\newtheorem{corollary}[theorem]{Corollary}
\newtheorem{lemma}[theorem]{Lemma}
\newtheorem{proposition}[theorem]{Proposition}
\theoremstyle{definition}
\newtheorem{definition}[theorem]{Definition}
\theoremstyle{remark}
\newtheorem{remark}[theorem]{Remark}
\numberwithin{equation}{section}
\title[States of a Continuum Kawasaki Model]{The Global Evolution of States of a Continuum Kawasaki Model with Repulsion}
\author{Joanna Bara{\'n}ska}
\address{Instytut Matematyki, Uniwersytet Marii Curie-Sk{\l}odowskiej, 20-031 Lublin, Poland}
\email{asia13p@wp.pl}
\author{ Yuri  Kozitsky}
\address{Instytut Matematyki, Uniwersytet Marii Curie-Sk{\l}odowskiej, 20-031 Lublin, Poland}
\email{jkozi@hektor.umcs.lublin.pl}
\begin{document}

\subjclass{82C22; 70F45; 60K35}%

\keywords{Kawasaki model, Markov evolution, configuration space,
stochastic semigroup,  correlation function, scale of Banach spaces
}

\begin{abstract}
An infinite system of point particles performing random jumps in
$\mathds{R}^d$  with repulsion is studied. The states of the system
are probability measures on the space of particle's configurations.
The result of the paper is the construction of the global in time
evolution of states with the help of the corresponding correlation
functions. It is proved that for each initial sub-Poissonian state
$\mu_0$, the constructed evolution $\mu_0 \mapsto \mu_t$ preserves
this property. That is, $\mu_t$ is sub-Poissonian for all $t>0$.
\end{abstract}
\maketitle


\section{Introduction}
\label{S1}

\subsection{Posing}

In this paper, we continue dealing with the Kawasaki model studied
in \cite{BKKK}. The model describes the evolution of an infinite
system of point particles placed in $\mathds{R}^d$ which perform
random jumps with repulsion. The phase space of the model is the set
$\Gamma$ of all subsets $\gamma \subset \mathds{R}^d$ such that the
set $\gamma\cap\Lambda$ is finite whenever $\Lambda \subset
\mathds{R}^d$ is compact. This set is equipped with a measurability
structure that allows for considering the probability measures on
$\Gamma$ as states of the system. Among them one may distinguish
Poissonian states in which the particles are independently
distributed over $\mathds{R}^d$. In \emph{sub-Poissonian} states,
the dependence between the particle's positions  is not too strong
(see the next subsection). In \cite{BKKK}, the evolution $\mu_0
\mapsto \mu_t$ of the system's states was shown to hold in the set
of sub-Poissonian states for $t< T$ with some $T < \infty$. The main
result of the present study consists in proving the existence of
such an evolution for all $t>0$. This is the first result of this
kind for infinite continuum systems of point particles performing
jumps with interaction. The case of free jumps was described in
\cite{BK,KLR}.

As was shown in \cite{KK}, for infinite particle systems with
birth-and-death dynamics the states remain sub-Poissonian globally
in time if the birth of the particles is in a sense controlled by
their death. For conservative dynamics in which the particles just
change their positions, the interaction may in general change the
sub-Poissonian character of the state in finite time (even cause an
explosion), e.g., due to an infinite number of simultaneous
correlated jumps.   Thus, the conceptual outcome of the present
study is that this is not the case for the considered model. The
important peculiarity of this result is that it has been obtained by
methods different from those used in \cite{KK}. We believe that a
combination of these methods with those of \cite{KK} can be of great
use in studying evolution of systems in which birth-and-death
processes are accompanied by random motion, e.g., individual-based
models of disease spread.

\subsection{Presenting the result}

To characterize states of an infinite particle system one employs
{\it observables} -- suitable functions $F:\Gamma \to \mathds{R}$.
Their evolution is described by the Kolmogorov equation
\begin{equation}
  \label{1}
 \frac{d}{dt} F_t = L F_t, \qquad F_t|_{t=0} = F_0,
\end{equation}
where the operator $L$ specifies the model. In our case, it has the
following form
\begin{eqnarray}\label{LF}
(LF)(\gamma) &= &\sum_{x\in \gamma}\int_{\mathds{R}^d} c(x,y,\gamma)
[F(\gamma\backslash x \cup y)-F(\gamma)]d y,
\end{eqnarray}
with $c$ given in (\ref{c}) below. The evolution of states is
supposed to be derived from the Fokker-Planck equation
\begin{equation}
  \label{1a}
\frac{d}{dt} \mu_t = L^* \mu_t, \qquad \mu_t|_{t=0} = \mu_0,
\end{equation}
related to that in (\ref{1}) by the duality
\begin{equation}
  \label{1b}
  \int_{\Gamma}F_t (\gamma) \mu_0 ( d \gamma) = \int_{\Gamma}F_0 (\gamma) \mu_t ( d
  \gamma).
\end{equation}
As is usual for models of this kind, the direct meaning of (\ref{1})
or (\ref{1a}) can only be given for states of finite systems, cf.
\cite{K}. In this case, the Banach space where the Cauchy problem in
(\ref{1a}) is defined can be the space of signed measures with
finite variation.

In this work, we continue following the approach in which the
evolution of states is described without the direct use of
(\ref{1a}), see \cite{BKKK,FKKK,KK} and the references therein. To
explain its essence let us consider the set of all compactly
supported continuous functions $\theta:\mathbb{R}^d\to (-1,0]$. For
a state $\mu$, its {\it Bogoliubov} functional \cite{FKO} is defined
as
\begin{equation}
  \label{I1}
B_\mu (\theta) = \int_{\Gamma} \prod_{x\in \gamma} ( 1 + \theta (x))
\mu( d \gamma),
\end{equation}
with $\theta$ running through the mentioned set of functions. For
the homogeneous Poisson measure $\pi_\varkappa$, $\varkappa>0$, the
functional (\ref{I1}) takes the form
\begin{equation*}
B_{\pi_\varkappa} (\theta) = \exp\left(\varkappa
\int_{\mathbb{R}^d}\theta (x) d x \right).
\end{equation*}
In state $\pi_\varkappa$, the particles are independently
distributed over $\mathbb{R}^d$ with density $\varkappa$. The set of
{\it sub-Poissonian} states $\mathcal{P}_{\rm exp}(\Gamma)$ is then
defined as that containing all those states $\mu$ for which $B_\mu$
can be continued, as a function of $\theta$, to an exponential type
entire function on $L^1 (\mathbb{R}^d)$. This exactly means that
$B_\mu$ can be written in the form
\begin{eqnarray}
  \label{I3}
B_\mu(\theta) = 1+ \sum_{n=1}^\infty
\frac{1}{n!}\int_{(\mathbb{R}^d)^n} k_\mu^{(n)} (x_1 , \dots , x_n)
\theta (x_1) \cdots \theta (x_n) d x_1 \cdots d x_n,
\end{eqnarray}
where $k_\mu^{(n)}$ is the $n$-th order correlation function of the
state $\mu$. It is a symmetric element of $L^\infty
((\mathbb{R}^d)^n)$ for which
\begin{equation}
\label{I4}
  \|k^{(n)}_\mu \|_{L^\infty
((\mathbb{R}^d)^n)} \leq C \exp( \vartheta n), \qquad n\in
\mathbb{N}_0,
\end{equation}
with some $C>0$ and $\vartheta \in \mathbb{R}$. Note that
$k_{\pi_\varkappa}^{(n)} (x_1 , \dots , x_n)= \varkappa^n$. Note
also that (\ref{I3}) can be viewed as an analog of the Taylor
expansion of the characteristic function of a probability measure.
That is why, $k^{(n)}_\mu$ are also called \emph{moment functions}.

Under standard conditions imposed on the jump kernel $c$, see
(\ref{c}) -- (\ref{33}), we prove that the correlation functions
evolve $k_{\mu_0}^{(n)}\mapsto k_t^{(n)}$ in such a way that each
$k_t^{(n)}$, $t>0$, is the correlation function of a unique
sub-Poissonian measure $\mu_t$, see Theorem \ref{1tm}. Moreover,
assuming that $k_{\mu_0}^{(n)}$ satisfies (\ref{I4}), we show that
the following holds
\[
\forall t >0 \quad  \forall n\in \mathbb{N}_0 \qquad 0\leq
k^{(n)}_{t}(x_1 , \dots , x_n) \leq C \exp\left( n[\vartheta +
\alpha t]\right),
\]
where $\alpha >0$ is a model parameter, see (\ref{10}).

\section{Preliminaries and the Model}

Here we briefly present necessary information on the subject -- its
more detailed description can be found in \cite{BKKK,FKKK,FKO,KK}
and in the literature quoted in these works.

\subsection{Configuration spaces}

Let $\mathcal{B}(\mathds{R}^d)$ and $\mathcal{B}_{\rm
b}(\mathds{R}^d)$ denote the sets of all Borel and all bounded Borel
subsets of $\mathds{R}^d$, respectively. The configuration space
$\Gamma$ mentioned above is equipped with the vague topology and
thus with the corresponding Borel $\sigma$-field
$\mathcal{B}(\Gamma)$.  For $\Lambda\in \mathcal{B}(\mathds{R}^d)$,
we set
\[
\Gamma_\Lambda = \{\gamma\in \Gamma: \gamma \subset \Lambda \}.
\]
Clearly $\Gamma_\Lambda \in \mathcal{B}(\Gamma)$, and hence
\[
 \mathcal{B}(\Gamma_\Lambda):=\{ A \cap \Gamma_\Lambda : A \in \mathcal{B}(\Gamma)\}
\]
is a sub-field of $\mathcal{B}(\Gamma)$. Let $p_{\Lambda}:\Gamma\to
\Gamma_\Lambda$ be the projection $p_\Lambda (\gamma) =
\gamma_\Lambda=\gamma \cap \Lambda$. It is clearly measurable, and
thus the sets
\begin{equation*}
 p^{-1}_\Lambda(A_\Lambda) :=\{ \gamma\in \Gamma: p_\Lambda (\gamma) \in A_\Lambda \},
 \quad A_\Lambda \in \mathcal{B}(\Gamma_\Lambda)
\end{equation*}
belong to $\mathcal{B}(\Gamma)$ for each Borel $\Lambda$. Let
$\mathcal{P}(\Gamma)$ denote the set of all  probability measures on
$(\Gamma, \mathcal{B}(\Gamma))$. For a given $\mu\in
\mathcal{P}(\Gamma)$, its projection on $(\Gamma_\Lambda,
\mathcal{B} (\Gamma_\Lambda))$ is defined as
\begin{equation}
 \label{5}
\mu^\Lambda (A_\Lambda) = \mu\left(p^{-1}_\Lambda (A_\Lambda)
\right), \qquad A_\Lambda \in \mathcal{B}(\Gamma_\Lambda).
\end{equation}
Let $\Gamma_0$ be the set of all finite $\gamma \in \Gamma$. It is
an element of $\mathcal{B}(\Gamma)$ as each of $\gamma \in \Gamma_0$
belongs to a certain $\Gamma_\Lambda$, $\Lambda \in \mathcal{B}_{\rm
b}(\mathds{R}^d)$. Note that $\Gamma_\Lambda \subset \Gamma_0$ for
each such $\Lambda$. Set $\mathds{N}_0 = \mathds{N}\cup \{0\}$. It
can be proved that a function $G:\Gamma_0 \to \mathds{R}$ is
$\mathcal{B}(\Gamma)/\mathcal{B}(\mathds{R} )$-measurable if and
only if, for each $n\in \mathds{N}_0$, there exists a symmetric
Borel function $G^{(n)}: (\mathds{R}^{d})^{n} \to \mathds{R}$ such
that
\begin{equation}
 \label{7}
 G(\eta) = G(\eta) = G^{(n)} ( x_1, \dots , x_{n}),
\end{equation}
for $\eta = \{ x_1, \dots , x_{n}\}$ .
\begin{definition}
  \label{Gdef}
A measurable function $G:\Gamma_0 \to \mathds{R}$ is said have
bounded support if: (a) there exists $\Lambda \in \mathcal{B}_{\rm
b} (\mathds{R}^d)$ such that $G(\eta) = 0$ whenever $\eta\cap
\Lambda^c \neq \emptyset$; (b) there exists $N\in \mathds{N}_0$ such
that $G(\eta)=0$ whenever $|\eta|
>N$. Here $\Lambda^c := \mathds{R}^d
\setminus \Lambda$ and $|\cdot |$ stands for cardinality. By
$\Lambda(G)$ and $N(G)$ we denote the smallest $\Lambda$ and $N$
with the properties just mentioned. By $B_{\rm bs}(\Gamma_0)$ we
denote the set of all such functions.
\end{definition}
The Lebesgue-Poisson measure $\lambda$ on $(\Gamma_0,
\mathcal{B}(\Gamma_0))$ is defined by the following formula
\begin{eqnarray}
\label{8} \int_{\Gamma_0} G(\eta ) \lambda ( d \eta)  = G(\emptyset)
+ \sum_{n=1}^\infty \frac{1}{n! } \int_{(\mathds{R}^d)^{n}} G^{(n)}
( x_1, \dots , x_{n} ) d x_1 \cdots dx_{n},
\end{eqnarray}
which has to hold for all $G\in B_{\rm bs}(\Gamma_0)$. For
$\gamma\in \Gamma$, by writing $\eta \Subset \gamma$ we mean that
$\eta\subset \gamma$ is nonempty and finite. For $G\in B_{\rm
bs}(\Gamma)$, we set
\begin{equation}
  \label{9a}
(KG)(\gamma) = \sum_{\eta \Subset \gamma} G(\eta).
\end{equation}
Note that the sum in (\ref{9a}) is finite and $KG$ is a cylinder
function on $\Gamma$. The latter means that it is
$\mathcal{B}(\Gamma_{\Lambda(G)})$-measurable, see Definition
\ref{Gdef}. Moreover,
\begin{equation}
  \label{9b}
|(KG)(\gamma)| \leq \left( 1 + |\gamma\cap\Lambda (G)|\right)^{N(G)}
.
\end{equation}

\subsection{Correlation functions}

Like in (\ref{7}), we introduce $k_\mu : \Gamma_0 \to \mathds{R}$
such that $k_\mu(\eta) = k^{(n)}_\mu (x_1, \dots , x_n)$ for $\eta =
\{x_1, \dots , x_n\}$, $n\in \mathds{N}$. We also set
$k_\mu(\emptyset)=1$. With the help of the measure introduced in
(\ref{8}), the formulas for $B_\mu$ in (\ref{I1}) and (\ref{I3}) can
be combined into the following formula
\begin{eqnarray}
  \label{1fa}
 B_\mu (\theta)& = & \int_{\Gamma_0} k_\mu(\eta) \prod_{x\in \eta} \theta (x) \lambda (d\eta)=: \int_{\Gamma_0} k_\mu(\eta) e( \eta; \theta) \lambda (d \eta)
 \\[.2cm]
 & = &  \int_{\Gamma} \prod_{x\in \gamma} (1+ \theta (x)) \mu (d \gamma) =: \int_{\Gamma} F_\theta (\gamma) \mu(d
 \gamma). \nonumber
\end{eqnarray}
Thereby, we can transform  the action of $L$ on $F$, as in
(\ref{LF}), to the action of $L^\Delta$ on $k_\mu$ according to the
rule
\begin{equation}
  \label{1g}
\int_{\Gamma}(L F_\theta) (\gamma) \mu(d \gamma) = \int_{\Gamma_0}
(L^\Delta k_\mu) (\eta) e(\eta;\theta)
 \lambda (d \eta).
\end{equation}
This will allow us to pass from (\ref{1}) to the corresponding
Cauchy problem for the correlation functions, cf. (\ref{16}) below.
The main advantage here is that $k_\mu$ is a function of {\em
finite} configurations.

For $\mu \in \mathcal{P}_{\rm exp}(\Gamma)$ and  $\Lambda \in
\mathcal{B}_{\rm b}(\mathds{R}^d)$, let $\mu^\Lambda$ be as in
(\ref{5}). Then $\mu^\Lambda$ is absolutely continuous with respect
to the restriction  $\lambda^\Lambda$ to
$\mathcal{B}(\Gamma_\Lambda)$  of the measure defined in (\ref{8}),
and hence we may write
\begin{equation}
\label{9c} \mu^\Lambda (d \eta ) = R^\Lambda_\mu (\eta)
\lambda^\Lambda ( d \eta), \qquad \eta \in \Gamma_\Lambda.
\end{equation}
Then the correlation function $k_\mu$ and the Radon-Nikodym
derivative $R_\mu^\Lambda$ satisfy
\begin{eqnarray}
  \label{9d}
k_\mu(\eta) & = & \int_{\Gamma_\Lambda} R^\Lambda_\mu (\eta \cup
\xi) \lambda^\Lambda ( d\xi).
\end{eqnarray}
Note that (\ref{9d}) relates $R^\Lambda_\mu$ with the restriction of
$k_\mu$ to $\Gamma_\Lambda$. The fact that these are the
restrictions of one and the same function
$k_\mu:\Gamma_0\to\mathds{R}$ corresponds to the Kolmogorov
consistency of the family $\{\mu^\Lambda: \Lambda \in
\mathcal{B}(\mathds{R}^d)\}$.

By (\ref{9a}), (\ref{5}), and (\ref{9c}) we get
\begin{equation}
  \label{9e}
\int_{\Gamma} (KG)(\gamma) \mu(d\gamma) = \langle \! \langle G,
k_\mu \rangle \!\rangle,
\end{equation}
holding for each $G\in B_{\rm bs}(\Gamma_0)$ and $\mu \in
\mathcal{P}_{\rm exp}(\Gamma)$. Here
\begin{equation}
  \label{9f}
\langle \! \langle G, k \rangle \!\rangle := \int_{\Gamma_0} G(\eta)
k(\eta) \lambda (d \eta),
\end{equation}
for suitable $G$ and $k$. Define
\begin{equation}
  \label{9g}
B^\star_{\rm bs} (\Gamma_0) =\{ G\in B_{\rm bs}(\Gamma_0):
(KG)(\gamma) \geq 0 \ {\rm for} \ {\rm all} \ \gamma\in \Gamma\}.
\end{equation}
By \cite[Theorems 6.1 and 6.2 and Remark 6.3]{Tobi} one can prove
the next statement.
\begin{proposition}
  \label{Gpn}
Let  a measurable function $k : \Gamma_0 \to \mathds{R}$  have the
following properties:
\begin{eqnarray}
  \label{9h}
& (a) & \ \langle \! \langle G, k \rangle \!\rangle \geq 0, \qquad
{\rm for} \ {\rm all} \ G\in B^\star_{\rm bs} (\Gamma_0);\\[.2cm]
& (b) & \ k(\emptyset) = 1; \qquad (c) \ \ k(\eta) \leq
 C^{|\eta|} ,
\nonumber
\end{eqnarray}
with (c) holding for some $C >0$ and $\lambda$-almost all $\eta\in
\Gamma_0$. Then there exists a unique $\mu \in \mathcal{P}_{\rm
exp}(\Gamma)$ for which $k$ is the correlation function.
\end{proposition}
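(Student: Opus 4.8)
The plan is to reconstruct $\mu$ from its prescribed correlation function $k$ locally, volume by volume, and then glue the local pieces together by a projective-limit argument. Fix $\Lambda\in\mathcal{B}_{\rm b}(\mathds{R}^d)$ and define, on $\Gamma_\Lambda$, the candidate Radon-Nikodym density by inverting the relation (\ref{9d}):
\[
R^\Lambda(\eta) := \int_{\Gamma_\Lambda} (-1)^{|\xi|}\, k(\eta\cup\xi)\, \lambda^\Lambda(d\xi), \qquad \eta\in\Gamma_\Lambda .
\]
First I would check that this integral is absolutely convergent and that $R^\Lambda\in L^1(\Gamma_\Lambda,\lambda^\Lambda)$: bounding $|k|$ by $C^{|\cdot|}$ via property (c) in (\ref{9h}) and integrating with (\ref{8}) yields $|R^\Lambda(\eta)|\leq C^{|\eta|}\exp(C|\Lambda|)$, which is finite on a bounded region. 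Substituting $R^\Lambda$ back into (\ref{9d}) and invoking the Möbius-type inversion on $\Gamma_\Lambda$ confirms that the restriction of $k$ to $\Gamma_\Lambda$ is recovered; in particular, taking $\eta=\emptyset$ together with property (b) gives $\int_{\Gamma_\Lambda}R^\Lambda\,d\lambda^\Lambda = k(\emptyset)=1$, the correct normalization.

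The crux is to show $R^\Lambda\geq 0$ $\lambda^\Lambda$-almost everywhere, so that $\mu^\Lambda(d\eta):=R^\Lambda(\eta)\lambda^\Lambda(d\eta)$ defines a genuine probability measure on $\Gamma_\Lambda$. For this I would use the finite-volume form of (\ref{9e}): a direct computation from (\ref{8}) and the inversion formula gives
\[
\langle\!\langle G, k\rangle\!\rangle = \int_{\Gamma_\Lambda}(KG)(\eta)\,R^\Lambda(\eta)\,\lambda^\Lambda(d\eta)
\]
for every $G\in B_{\rm bs}(\Gamma_0)$ supported in $\Lambda$. By (\ref{9g}) and property (a) in (\ref{9h}), the left-hand side is non-negative whenever $KG\geq 0$. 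On the bounded volume $\Lambda$ the transform $K$ is a bijection on functions of bounded support, and the cone $\{KG : G\in B^\star_{\rm bs}(\Gamma_0)\}$ is rich enough to approximate arbitrary non-negative test functions on $\Gamma_\Lambda$; consequently the non-negativity of all these pairings forces $R^\Lambda\geq 0$. This realizability step is precisely what the cited results of Kuna supply, and it is the one point I expect to be genuinely hard: one must show that the positivity encoded in (a) controls the sign of the density pointwise, and not merely in an averaged sense.

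With $\{\mu^\Lambda\}_{\Lambda\in\mathcal{B}_{\rm b}(\mathds{R}^d)}$ in hand, I would verify Kolmogorov consistency, namely that for $\Lambda\subset\Lambda'$ the projection (\ref{5}) of $\mu^{\Lambda'}$ onto $\Gamma_\Lambda$ equals $\mu^\Lambda$. This is automatic, because all the densities are inverse $K$-transforms of the single function $k$, as anticipated in the remark following (\ref{9d}). Consistency then yields, via the extension theorem for measures on the configuration space, a unique $\mu\in\mathcal{P}(\Gamma)$ whose projections are the $\mu^\Lambda$. Running (\ref{9d}) once more shows that $k$ is the correlation function of this $\mu$, and the bound (c) gives exactly (\ref{I4}) with the same $C$ and $\vartheta=\ln C$, so that $\mu\in\mathcal{P}_{\rm exp}(\Gamma)$.

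Uniqueness is the easiest part. If $\mu_1,\mu_2\in\mathcal{P}_{\rm exp}(\Gamma)$ share the correlation function $k$, then by (\ref{1fa}) their Bogoliubov functionals coincide on the whole admissible class of $\theta$; since the exponential-type bound (\ref{I4}) makes $B_\mu$ an entire functional that determines $\mu$, we conclude $\mu_1=\mu_2$. Equivalently, the local densities $R^\Lambda$ are uniquely determined by $k$ through the inversion formula above, and a measure on $\Gamma$ is uniquely determined by its local projections.
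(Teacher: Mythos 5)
Your sketch is essentially the standard Lenard--Kondratiev--Kuna argument (local inverse $K$-transform, positivity of the densities $R^\Lambda$ from condition (a) via the fact that $K$ maps bounded-support functions bijectively onto bounded-support functions, consistency, projective-limit extension, and determinacy of $\mu$ from the bound (c)), which is exactly what the paper invokes by citing Theorems 6.1, 6.2 and Remark 6.3 of \cite{Tobi} without reproducing the proof. The outline is correct; the only places where you are lighter than the cited source are the ``automatic'' consistency check (a short Fubini/M\"obius computation, not a tautology) and the extension step, which needs the version of the measure-extension theorem adapted to the configuration space $\Gamma$ rather than the bare Kolmogorov theorem.
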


\subsection{The model}

The model we consider is specified by the operator $L$ given in
(\ref{LF}) with
\begin{equation}\label{c}
c(x,y,\gamma)  =  a(x-y)\exp\left(-\sum_{z \in
\gamma}\phi(y-z)\right).
\end{equation}
The jump kernel $a: \mathds{R}^d \rightarrow [0,+\infty)$ is such
that $a(x)=a(-x)$ and
\begin{equation}
 \label{10}
\int_{\mathds{R}^d}a(x)d x =: \alpha < \infty,
\end{equation}
whereas the repulsion potential $\phi: \mathds{R}^d \rightarrow
[0,+\infty)$, $\phi(x) = \phi(-x)$, is supposed to be such that
\begin{eqnarray}
  \label{33}
\int_{\mathds{R}^d} \phi(x) d x =:\langle \phi \rangle < \infty,
\qquad \esssup_{x\in \mathds{R}^d} \phi (x) =:\bar{\phi} < \infty.
\end{eqnarray}
Then also
\begin{equation}
 \label{11}
 \int_{\mathds{R}^d}\bigg{(}1-\exp(-\phi(x))\bigg{)}d x \leq \langle \phi \rangle.
\end{equation}
By (\ref{LF}) and (\ref{1g}) one obtains, cf. \cite[Eq.
(3.1)]{BKKK},
\begin{eqnarray}
  \label{15}
(L^\Delta k) (\eta) & = & \sum_{y\in \eta} \int_{\mathds{R}^d} a
(x-y) e(\tau_y;\eta\setminus y \cup x) (Q_y k) (\eta\setminus y \cup x) d x \nonumber \\[.2cm]
& - & \sum_{x\in \eta} \int_{\mathds{R}^d} a (x-y) e(\tau_y;\eta)
(Q_y k) (\eta) d y.
\end{eqnarray}
Here $e$ is as in (\ref{1fa}),
\begin{equation}
  \label{13}
(Q_y k) (\eta)  :=  \int_{\Gamma_0} k(\eta \cup \xi)e(t_y ;\xi)
\lambda(d\xi),
\end{equation}
and
\begin{equation}
  \label{12}
 \tau_x (y) := \exp(-\phi (x-y)), \quad  t_x (y):= \tau_x (y) - 1,
 \quad \
x, y\in \mathds{R}^d.
\end{equation}

\section{The result}

 \label{S3}

As mentioned above, instead of directly dealing with the problem in
(\ref{1a}) we pass from $\mu_0$ to the corresponding correlation
function $k_{\mu_0}$ and then consider the problem
\begin{equation}
  \label{16}
\frac{d}{dt} k_t = L^\Delta k_t, \qquad k_t|_{t=0} = k_{\mu_0}
\end{equation}
with $L^\Delta$ given in (\ref{15}). The aim is to prove the
existence of a unique global solution $k_t$ of (\ref{16}) which is
the correlation function of a unique state $\mu_t \in
\mathcal{P}_{\rm exp}(\Gamma)$.

We begin by defining  (\ref{16}) in the corresponding spaces of
functions $k:\Gamma_0 \to \mathds{R}$. From the very representation
(\ref{I3}), see also (\ref{1fa}), it follows that $\mu \in
\mathcal{P}_{\rm exp}(\Gamma)$ implies
\begin{equation*}
 |k_\mu (\eta)| \leq C \exp( \vartheta  |\eta|),
\end{equation*}
holding for $\lambda$-almost all $\eta\in \Gamma_0$, some $C>0$, and
$\vartheta\in \mathds{R}$. Keeping this in mind we set
\begin{equation}
  \label{17a}
 \|k\|_\vartheta = \esssup_{\eta \in \Gamma_0}\left\{ |k_\mu (\eta)| \exp\big{(} - \vartheta
  |\eta| \big{)} \right\}.
\end{equation}
Then
\begin{equation*}
\mathcal{K}_\vartheta := \{ k:\Gamma_0\to \mathds{R}:
\|k\|_\vartheta <\infty\}
\end{equation*}
is a Banach space with norm (\ref{17a}) and the usual linear
operations. In fact, we are going to use the ascending scale of such
spaces $\mathcal{K}_\vartheta$, $\vartheta \in \mathds{R}$, with the
property
\begin{equation}
  \label{19}
\mathcal{K}_\vartheta \hookrightarrow \mathcal{K}_{\vartheta'},
\qquad \vartheta < \vartheta',
\end{equation}
where $\hookrightarrow$  denotes continuous embedding. Set, cf.
(\ref{9e}), (\ref{9f}), and (\ref{9g}),
\begin{equation}
  \label{19a}
\mathcal{K}^\star_\vartheta =\{k\in \mathcal{K}_\vartheta: \langle
\! \langle G,k \rangle \! \rangle \geq 0 \ {\rm for} \ {\rm all} \
G\in B^\star_{\rm bs} (\Gamma_0)\},
\end{equation}
which is a subset of the cone
\begin{equation}
  \label{19b}
\mathcal{K}^+_\vartheta =\{k\in \mathcal{K}_\vartheta: k(\eta) \geq
0 \ \ {\rm for} \  \lambda-{\rm almost} \ {\rm all} \ \eta \in
\Gamma_0\}.
\end{equation}
By Proposition \ref{Gpn} it follows that each $k\in
\mathcal{K}^\star_\vartheta$ such that $k(\emptyset) = 1$ is the
correlation function of a unique state $\mu\in \mathcal{P}_{\rm
exp}(\Gamma)$. Then we define
\begin{equation}
  \label{19c}
\mathcal{K} = \bigcup_{\vartheta \in \mathds{R}}
\mathcal{K}_\vartheta, \qquad \mathcal{K}^\star = \bigcup_{\vartheta
\in \mathds{R}} \mathcal{K}_\vartheta^\star.
\end{equation}
As a sum of Banach spaces, the linear space $\mathcal{K}$ is
equipped with the corresponding inductive topology which turns it
into a locally convex space.

For a given $\vartheta\in \mathds{R}$, by (\ref{15}) -- (\ref{12})
we define $L^\Delta_\vartheta$ as a linear operator in
$\mathcal{K}_\vartheta$ with domain
\begin{equation}
  \label{20}
\mathcal{D} (L^\Delta_\vartheta) = \{ k\in \mathcal{K}_\vartheta:
L^\Delta k \in \mathcal{K}_\vartheta\}.
\end{equation}
\begin{lemma}
  \label{1lm}
For each $\vartheta'' < \vartheta$, cf. (\ref{19}), it follows that
$\mathcal{K}_{\vartheta''} \subset \mathcal{D}
(L^\Delta_\vartheta)$.
\end{lemma}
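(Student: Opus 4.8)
The plan is to verify, for an arbitrary $k\in\mathcal{K}_{\vartheta''}$, the two conditions defining $\mathcal{D}(L^\Delta_\vartheta)$ in (\ref{20}). The membership $k\in\mathcal{K}_\vartheta$ is immediate from the embedding (\ref{19}), so the whole task reduces to the quantitative estimate $\|L^\Delta k\|_\vartheta<\infty$; in fact I would aim at the stronger bound $\|L^\Delta k\|_\vartheta\le C(\vartheta,\vartheta'')\,\|k\|_{\vartheta''}$, which is the form likely to be reused afterwards. The first ingredient is a uniform control of the inner operator $Q_y$. Starting from (\ref{13}) and the pointwise bound $|k(\eta)|\le\|k\|_{\vartheta''}\,e^{\vartheta''|\eta|}$, and using that $t_y(z)=e^{-\phi(y-z)}-1\in[-1,0]$, I would apply the Lebesgue--Poisson exponentiation formula $\int_{\Gamma_0}\prod_{z\in\xi}g(z)\,\lambda(d\xi)=\exp(\int_{\mathds{R}^d}g(z)\,dz)$ with $g(z)=e^{\vartheta''}|t_y(z)|$. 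Since $\int_{\mathds{R}^d}|t_y(z)|\,dz=\int_{\mathds{R}^d}(1-e^{-\phi(w)})\,dw\le\langle\phi\rangle$ by (\ref{11}), \emph{uniformly in} $y$ (by translation invariance of Lebesgue measure), this yields a $y$-independent estimate of the form $|(Q_yk)(\eta)|\le\|k\|_{\vartheta''}\exp(e^{\vartheta''}\langle\phi\rangle)\,e^{\vartheta''|\eta|}$.

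Next I would insert this into the two sums of (\ref{15}). Because $0<\tau_y(z)\le1$, the factors $e(\tau_y;\cdot)$ lie in $(0,1]$ and can simply be discarded. The crucial structural observation is that a jump conserves the particle number, $|\eta\setminus y\cup x|=|\eta|$, so after applying the $Q_y$-bound the first sum carries the same exponential weight $e^{\vartheta''|\eta|}$ as the second. The $x$- and $y$-integrals then reduce to $\int_{\mathds{R}^d}a(x-y)\,dx=\alpha$ through (\ref{10}), while each sum over the $|\eta|$ particles contributes a factor $|\eta|$. Altogether this gives the pointwise bound $|(L^\Delta k)(\eta)|\le 2\alpha\exp(e^{\vartheta''}\langle\phi\rangle)\,\|k\|_{\vartheta''}\,|\eta|\,e^{\vartheta''|\eta|}$.

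Finally, to pass to the $\vartheta$-norm (\ref{17a}) I would multiply by $e^{-\vartheta|\eta|}$ and take the essential supremum. The linear prefactor $|\eta|$ is absorbed by the strictly negative exponent $\vartheta''-\vartheta$ via the elementary estimate $\sup_{n\in\mathds{N}_0}n\,e^{-(\vartheta-\vartheta'')n}\le[e(\vartheta-\vartheta'')]^{-1}$, which is finite precisely because $\vartheta''<\vartheta$. This produces $\|L^\Delta k\|_\vartheta\le\frac{2\alpha}{e(\vartheta-\vartheta'')}\exp(e^{\vartheta''}\langle\phi\rangle)\,\|k\|_{\vartheta''}<\infty$, hence $k\in\mathcal{D}(L^\Delta_\vartheta)$, completing the argument.

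I expect the only genuine obstacle to be the uniform-in-$y$ control of $Q_y$: everything hinges on the integral $\int_{\mathds{R}^d}|t_y(z)|\,dz\le\langle\phi\rangle$ being independent of $y$, which decouples the repulsion weight from the jump integral and thereby prevents the subsequent integration against $a$ from generating an unbounded contribution. The gap $\vartheta''<\vartheta$ plays a secondary but essential role: it is exactly what compensates the linear-in-$|\eta|$ loss arising from summing over the particles of $\eta$.
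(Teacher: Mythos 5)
Your argument is correct and coincides with the paper's own proof: the same uniform-in-$y$ bound on $Q_y$ via (\ref{11}) and the Lebesgue--Poisson exponentiation formula, the same pointwise estimate $|(L^\Delta k)(\eta)|\le 2\alpha\exp\left(\langle\phi\rangle e^{\vartheta''}\right)\|k\|_{\vartheta''}\,|\eta|\,e^{\vartheta''|\eta|}$, and the same absorption of the factor $|\eta|$ through $\sup_{x>0}x e^{-\sigma x}\le 1/(e\sigma)$, yielding exactly the bound (\ref{24}). Nothing to add.
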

\begin{proof}
For $\vartheta'' < \vartheta$, by (\ref{11}), (\ref{13}),
(\ref{12}), and (\ref{17a}) we have
\begin{eqnarray*}
\left\vert (Q_y k)(\eta)\right\vert   & \leq & \|k\|_{\vartheta''}
\exp\left(  \vartheta'' |\eta|\right) \qquad  \\[.2cm] & \times & \int_{\Gamma_0} \exp\left(  \vartheta'' |\xi|\right) \prod_{z\in \xi}
\bigg{(} 1 - \exp\left( - \phi (z-y)\right)\bigg{)} \lambda (
d\xi)\nonumber \\[.2cm]
& \leq & \|k\|_{\vartheta''} \exp\left(  \vartheta'' |\eta|\right)
\exp\left(\langle \phi \rangle e^{\vartheta''} \right) . \nonumber
\end{eqnarray*}
Now we apply the latter estimate and (\ref{10}) in (\ref{15}) and
obtain
\begin{equation}
  \label{23}
|(L^\Delta k)(\eta)| \leq  2\alpha\|k\|_{\vartheta''} \exp\left(  \vartheta'' |\eta|\right) \\[.2cm]
|\eta| \exp\left(\langle \phi \rangle e^{\vartheta''} \right).
\end{equation}
By means of the inequality $x\exp(-\sigma x) \leq 1/ e \sigma$, $x,
\sigma
>0$, we get from (\ref{17a}) and (\ref{23}) the following estimate
\begin{gather}
  \label{24}
 \|L^\Delta k\|_{\vartheta}  \leq  \frac{2\alpha\|k\|_{\vartheta''}}{e(\vartheta - \vartheta'')}
  \exp\left(
\langle \phi \rangle e^{\vartheta''}\right),
\end{gather}
which yields the proof.
\end{proof}
\begin{corollary}
  \label{Gco}
For each $\vartheta,\vartheta''\in \mathds{R}$ such that
$\vartheta'' < \vartheta$, the expression in (\ref{15}) defines a
bounded linear operator $L^\Delta_{\vartheta\vartheta''}:
\mathcal{K}_{\vartheta''}\to \mathcal{K}_{\vartheta}$ the norm of
which can be estimated by means of (\ref{24}).
\end{corollary}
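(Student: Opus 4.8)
The plan is to read off the corollary directly from Lemma \ref{1lm} and the estimate obtained in its proof. First I would observe that the expression (\ref{15}) is linear in $k$: the operator $Q_y$ defined in (\ref{13}) is an integral transform linear in $k$, and for a fixed finite $\eta$ the right-hand side of (\ref{15}) is a finite sum (over $y\in\eta$ in the first line and over $x\in\eta$ in the second) of terms each linear in $Q_y k$. Hence $k\mapsto L^\Delta k$ is a linear map, with no convergence questions arising, since every sum ranges over the finite set $\eta$. This allows me to define $L^\Delta_{\vartheta\vartheta''}$ as the map sending $k\in\mathcal{K}_{\vartheta''}$ to $L^\Delta k$, regarded as an element of $\mathcal{K}_\vartheta$.

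Next I would invoke Lemma \ref{1lm} for well-definedness: since $\vartheta''<\vartheta$, the lemma gives $\mathcal{K}_{\vartheta''}\subset\mathcal{D}(L^\Delta_\vartheta)$, and by the definition (\ref{20}) of this domain every $k\in\mathcal{K}_{\vartheta''}$ indeed satisfies $L^\Delta k\in\mathcal{K}_\vartheta$. Thus $L^\Delta_{\vartheta\vartheta''}$ genuinely maps $\mathcal{K}_{\vartheta''}$ into $\mathcal{K}_\vartheta$. Finally, boundedness is exactly the content of the estimate (\ref{24}) established in the proof of the lemma: for all $k\in\mathcal{K}_{\vartheta''}$ one has $\|L^\Delta_{\vartheta\vartheta''}k\|_\vartheta\le M\|k\|_{\vartheta''}$ with $M=\frac{2\alpha}{e(\vartheta-\vartheta'')}\exp(\langle\phi\rangle e^{\vartheta''})$, whence the operator norm is at most $M$. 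There is no genuine analytic obstacle here, as the substantive work was carried out in Lemma \ref{1lm}; the only two points deserving a line apiece are the verification of linearity from the defining expression and the reinterpretation of (\ref{24}) as a bound on the operator norm of $L^\Delta_{\vartheta\vartheta''}$.
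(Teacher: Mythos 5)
Your proposal is correct and follows exactly the route the paper intends: the corollary is stated without a separate proof precisely because it is an immediate reinterpretation of the estimate (\ref{24}) from the proof of Lemma \ref{1lm}, together with the evident linearity of the expression (\ref{15}). Your two added remarks (linearity involves only finite sums over $\eta$, and (\ref{24}) is a bound on the operator norm) are the right ones and nothing further is needed.
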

In what follows, we consider two types of operators defined by the
expression in (\ref{15}): (a) unbounded operators
$(L^\Delta_\vartheta, \mathcal{D}(L^\Delta_\vartheta))$,
$\vartheta\in \mathds{R}$, with domains as in (\ref{20}) and Lemma
\ref{1lm}; (b) bounded operators $L^\Delta_{ \vartheta \vartheta''}$
as in Corollary \ref{Gco}. These operators are related to each other
in the following way:
\begin{equation}
  \label{24a}
\forall \vartheta'' < \vartheta \ \  \forall k \in
\mathcal{K}_{\vartheta''} \qquad L^\Delta_{\vartheta
 \vartheta''} k = L^\Delta_{\vartheta} k.
\end{equation}
 By means of the bounded operators $L^\Delta_{\vartheta
 \vartheta''} : \mathcal{K}_{\vartheta''} \to \mathcal{K}_{\vartheta}$
 we  also
define a continuous linear operator $L^\Delta:\mathcal{K} \to
\mathcal{K} $, see (\ref{19c}). In view of this, we consider the
following two equations. The first one is
\begin{equation}
  \label{24c}
\frac{d}{dt} k_t = L^\Delta_\vartheta k_t, \qquad k_t|_{t=0} =
k_{\mu_0},
\end{equation}
considered as an equation in a given Banach space
$\mathcal{K}_{\vartheta}$. The second equation is (\ref{16}) with
$L^\Delta$ given in (\ref{15}) considered in the locally convex
space $\mathcal{K}$.
\begin{definition}
  \label{S1df}
By a solution of (\ref{24c}) on a time interval, $[0,T)$, $T\leq
+\infty$, we mean a continuous map $[0,T)\ni t \mapsto k_t \in
\mathcal{D} (L^\Delta_\vartheta)$ such that the map $[0,T)\ni t
\mapsto d k_t / dt\in \mathcal{K}_\vartheta$ is also continuous and
both equalities in (\ref{24c}) are satisfied. Likewise, a
continuously differentiable map $[0,T)\ni t \mapsto k_t \in
\mathcal{K}$ is said to be a solution of (\ref{16}) in $\mathcal{K}$
if both equalities therein are satisfied for all $t$. Such a
solution is called global if $T=+\infty$.
\end{definition}
\begin{remark}
  \label{D1rk}
The map $[0,T)\ni t\mapsto k_t \in \mathcal{K}$ is a solution of
(\ref{16}) if and only if, for each $t \in [0, T)$, there exists
$\vartheta''\in \mathds{R}$ such that $k_t\in
\mathcal{K}_{\vartheta''}$ and, for each $\vartheta > \vartheta''$,
the map $t\mapsto k_t$ is continuously differentiable at $t$ in
$\mathcal{K}_\vartheta$ and $d k_t/ dt = L^\Delta_\vartheta k_t =
L^\Delta_{\vartheta \vartheta''} k_t$.
\end{remark}
Our main result is contained in the following statement.
\begin{theorem}
  \label{1tm}
Assume that (\ref{10}) and (\ref{33}) hold. Then for each $\mu_0 \in
\mathcal{P}_{\rm exp}(\Gamma)$, the problem (\ref{16})  with $k_0 =
k_{\mu_0}$  has a unique global solution $k_t \in
\mathcal{K}^\star\subset \mathcal{K}$ which has the property
$k_t(\emptyset) = 1$. Therefore, for each $t\geq 0$ there exists a
unique state $\mu_t\in \mathcal{P}_{\rm exp}(\Gamma)$ such that $k_t
= k_{\mu_t}$. Moreover, let $k_0$ and $C>0$ be such that $k_0(\eta)
\leq C^{|\eta|}$ for $\lambda$-almost all $\eta\in \Gamma_0$, see
(\ref{9h}). Then the mentioned solution satisfies
\begin{equation}
  \label{24d}
 \forall t\geq 0 \qquad 0\leq k_t (\eta) \leq C^{|\eta|} \exp\left( t
 \alpha |\eta| \right) .
\end{equation}
\end{theorem}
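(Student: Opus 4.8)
The plan is to combine an Ovsyannikov-type construction in the scale $\{\mathcal{K}_\vartheta\}$ with an a priori pointwise bound that keeps the solution inside spaces whose exponential order grows only linearly in time, thereby ruling out a finite-time blow-up of the exponential order and upgrading local to global existence.

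\emph{Local existence and uniqueness.} First I would read Corollary \ref{Gco} as an Ovsyannikov estimate: by \eqref{24}, $L^\Delta$ maps $\mathcal{K}_{\vartheta''}$ into $\mathcal{K}_\vartheta$ with operator norm bounded by a constant times $1/(\vartheta-\vartheta'')$ whenever $\vartheta''<\vartheta$. Fixing $\vartheta_0$ with $k_{\mu_0}\in\mathcal{K}_{\vartheta_0}$ and any $\vartheta_1>\vartheta_0$, the series $k_t=\sum_{n\ge0}\frac{t^n}{n!}(L^\Delta)^n k_{\mu_0}$ then converges in $\mathcal{K}_{\vartheta_1}$ for $t$ below a time $T(\vartheta_0,\vartheta_1)>0$ determined by \eqref{24}, yielding the unique solution of \eqref{24c} and, in the sense of Remark \ref{D1rk}, the unique local solution of \eqref{16} in $\mathcal{K}$. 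The normalization is automatic: since for $\eta=\emptyset$ both sums in \eqref{15} are empty, $(L^\Delta k)(\emptyset)=0$, so $k_t(\emptyset)=k_{\mu_0}(\emptyset)=1$ throughout the interval of existence.

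\emph{The a priori bound, and the main obstacle.} The crux is the two-sided estimate \eqref{24d}, which I would derive from a super-solution comparison. For $g_z(\eta):=z^{|\eta|}$, $z>0$, the exponential formula behind \eqref{8} applied to \eqref{13}, \eqref{12} gives $(Q_y g_z)(\zeta)=z^{|\zeta|}\exp\!\big(-z\int_{\mathds{R}^d}(1-e^{-\phi(w)})\,dw\big)\le z^{|\zeta|}$, the inequality using \eqref{11}. Substituting this into \eqref{15}, bounding $e(\tau_y;\cdot)\le 1$ and discarding the nonnegative loss term produces the pointwise inequality $(L^\Delta g_z)(\eta)\le \alpha|\eta|\,g_z(\eta)$, with $\alpha$ from \eqref{10}. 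Taking $z=z(t):=Ce^{\alpha t}$ one verifies that $\bar k_t:=g_{z(t)}$ obeys $\frac{d}{dt}\bar k_t=\alpha|\eta|\bar k_t\ge L^\Delta\bar k_t$ with $\bar k_0\ge k_{\mu_0}$, i.e. $\bar k_t$ majorizes the claimed bound. The hard part is that a comparison principle is not automatic for the abstract equation \eqref{16}; I would make it rigorous by approximating $L^\Delta$ with bounded generators — for instance truncating the kernel $a$ and restricting to configurations of bounded cardinality — so that each approximate evolution is a genuine positivity-preserving Markov evolution of a finite system whose correlation functions are honest correlation functions obeying the same pointwise comparison. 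Passing to the limit, with uniformity supplied by \eqref{24} and by the super-solution itself, transfers to $k_t$ both positivity, $k_t\ge0$, and the upper bound in \eqref{24d}, and shows that $k_t$ inherits the property \eqref{9h}(a) defining $\mathcal{K}^\star$ in \eqref{19a}. This approximation-and-limit passage is where the substantive work lies.

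\emph{Globality and reconstruction of the states.} Finally I would establish global existence by contradiction. If the maximal existence time $T^\ast$ were finite, then \eqref{24d} would keep $k_t$ in the single space $\mathcal{K}_{\vartheta_\ast}$, $\vartheta_\ast:=\log C+\alpha T^\ast$, with $\sup_{t<T^\ast}\|k_t\|_{\vartheta_\ast}\le 1$; writing $k_t-k_s=\int_s^t L^\Delta k_r\,dr$ and using \eqref{24} in a larger space $\mathcal{K}_{\vartheta'}$, $\vartheta'>\vartheta_\ast$, shows $t\mapsto k_t$ is Cauchy as $t\uparrow T^\ast$, so $k_{T^\ast}$ exists in $\mathcal{K}_{\vartheta'}$ and the local construction can be restarted from it, extending the solution past $T^\ast$ — a contradiction. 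Hence $T^\ast=+\infty$. For each fixed $t$ the function $k_t$ then meets all hypotheses of Proposition \ref{Gpn}: \eqref{9h}(a) and $k_t\ge0$ from the limiting step, \eqref{9h}(b) from $k_t(\emptyset)=1$, and \eqref{9h}(c) from the upper bound in \eqref{24d}. Proposition \ref{Gpn} therefore produces a unique $\mu_t\in\mathcal{P}_{\rm exp}(\Gamma)$ with $k_t=k_{\mu_t}$, completing the proof of Theorem \ref{1tm}.
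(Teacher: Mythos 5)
Your overall architecture coincides with the paper's: an Ovsyannikov-type local solution in the scale $\{\mathcal{K}_\vartheta\}$, a comparison with the free-jump evolution to get the bound (\ref{24d}), and then globality because the exponential order $\log C+\alpha t$ grows only linearly while the local existence time $\tau(\vartheta)$ stays bounded away from zero on bounded $\vartheta$-intervals. The local step and the concluding contradiction argument are sound. The genuine gap is the middle step, which you yourself flag as ``where the substantive work lies'' but then leave as a sketch that, as stated, would not go through. The obstruction is that $L^\Delta$ is \emph{not} positivity-preserving on the pointwise cone $\mathcal{K}^+_\vartheta$: the gain term in (\ref{15}) involves $Q_y$, which by (\ref{13}) integrates $k$ against the alternating-sign function $e(t_y;\xi)$, so $k\geq 0$ does not imply $Q_yk\geq 0$, and a supersolution inequality $\frac{d}{dt}\bar k_t\geq L^\Delta\bar k_t$ does not by itself yield $k_t\leq\bar k_t$. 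The correct cone is $\mathcal{K}^\star_\vartheta$ of \emph{positive-definite} functions, and the comparison hinges on the inequality $0\leq e(\tau_y;\cdot)(Q_yk_t)\leq k_t$ in the weak sense of (\ref{9f}) (Lemma \ref{J1lm}), which is proved by representing $k_t^{\Lambda,N}$ through a genuine probability density $R^{\Lambda,N}_t$ and computing $\langle\!\langle G,Q_yk^{\Lambda,N}_t\rangle\!\rangle$ explicitly as in (\ref{78b})--(\ref{78c}). Only with this does $D=\bar L-L^\Delta$ map $\mathcal{K}^\star$ into $\mathcal{K}^+$, so that the Duhamel identity (\ref{83}) gives $u_t-k_t\in\mathcal{K}^+$.

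Your proposed repair --- truncate to bounded generators so that the approximants are ``honest correlation functions obeying the same pointwise comparison'' --- buries a second nontrivial step: identifying the Ovsyannikov solution $S_{\vartheta\vartheta^*}(t)k_0^{\Lambda,N}$ with the correlation function of the evolved density $S_R(t)R_0^{\Lambda,N}$. These two evolutions live in unrelated Banach spaces ($\mathcal{K}_\vartheta$ versus weighted $L^1(\Gamma_0,d\lambda)$), and equality of the two objects for $t>0$ is exactly what the paper's identification lemma establishes, via the quasi-analyticity bounds (\ref{73}), (\ref{73a}) and the Denjoy--Carleman theorem applied to $\psi_G-\varphi_G$, which vanishes to all orders at $t=0$ by (\ref{74}). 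Without that identification you cannot transfer positivity, the property (\ref{9h})(a), or the pointwise comparison from the approximants to $k_t$. So the proposal is right in outline and in the choice of comparison function $u_t(\eta)=C^{|\eta|}e^{t\alpha|\eta|}$, but the two lemmas that carry the actual weight of the theorem are missing rather than merely deferred.
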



\section{The Proof of Theorem \ref{1tm}}
\label{SPr}

Our strategy of the proof resembles that used in \cite{KK}.
Basically, it consist in performing the following three steps: (a)
proving the existence of a unique solution of (\ref{24c}) with $t<
T$ for some $T<\infty$; (b) proving the identification lemma, i.e.,
that the solution of (\ref{24c}) satisfies the conditions of
Proposition \ref{Gpn} and hence is the correlation function of a
unique sub-Poissonian state; (c) constructing the extension of the
solution to all $t>0$ by employing the positive definiteness
obtained in (b).

\subsection{Finite time horizon}

For $\vartheta, \vartheta'\in \mathds{R}$ such that $\vartheta <
\vartheta'$, we set, cf. (\ref{24}),
\begin{gather}
  \label{21a}
T (\vartheta',\vartheta) = \frac{\vartheta' -
\vartheta}{2\alpha}\exp\left( - \langle
\phi \rangle
e^{\vartheta'}\right).
\end{gather}
For a fixed $\vartheta'\in \mathds{R}$, $T (\vartheta', \vartheta))$
can be made as big as one wants by taking small enough $\vartheta$.
However, if $\vartheta$ is fixed, then
\begin{equation}
  \label{N1}
 \sup_{\vartheta' > \vartheta} T(\vartheta', \vartheta) = \frac{\delta (\vartheta)}{2 \alpha} \exp\left( -
\frac{1}{\delta (\vartheta)}\right) =: \tau(\vartheta) < \infty,
\end{equation}
where $\delta(\vartheta)$ is the unique positive solution of the
equation
\begin{equation}
  \label{N2}
\delta e^\delta = \exp\left(- \vartheta - \log \langle
\phi \rangle \right).
\end{equation}
\begin{remark}
  \label{JJrk}
The supremum in (\ref{N1}) is attained at $ \vartheta' = \vartheta +
\delta (\vartheta)$. Note also that $\delta (\vartheta) \to 0$, and
hence $\tau(\vartheta) \to 0$, as $\vartheta \to +\infty$.
\end{remark}
\begin{lemma}
  \label{3tm}
For an arbitrary  $ \vartheta \in \mathds{R}$, the problem in
(\ref{24c}) with $k_{0}\in \mathcal{K}_{\vartheta}$ has a unique
solution $k_{t}\in \mathcal{K}_{\vartheta+ \delta(\vartheta)}$ on
the time interval $[0, \tau(\vartheta))$.
\end{lemma}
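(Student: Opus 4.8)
The plan is to construct the solution by the Ovsyannikov method on the ascending scale $\{\mathcal{K}_s\}_{s\in\mathds{R}}$, using as the only analytic input the norm bound (\ref{24}) on the index-raising operators $L^\Delta_{\vartheta\vartheta''}$ of Corollary \ref{Gco}. Fix $\vartheta$ and put $\vartheta'=\vartheta+\delta(\vartheta)$, so that by (\ref{N1}) and Remark \ref{JJrk} one has $T(\vartheta',\vartheta)=\tau(\vartheta)$. I would seek the solution as the exponential series
\[
k_t=\sum_{n=0}^{\infty}\frac{t^n}{n!}\,(L^\Delta)^n k_0 ,
\]
and the crux is to show that it converges in $\mathcal{K}_{\vartheta'}$ for every $t<\tau(\vartheta)$.

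To bound the $n$-th term I would insert the $n+1$ equally spaced indices $s_j=\vartheta+(j/n)(\vartheta'-\vartheta)$, $j=0,\dots,n$, and factor $(L^\Delta)^n=L^\Delta_{s_n s_{n-1}}\cdots L^\Delta_{s_1 s_0}$ through the intermediate spaces. Applying (\ref{24}) to each factor, with $s_j-s_{j-1}=(\vartheta'-\vartheta)/n$ and $\exp(\langle\phi\rangle e^{s_{j-1}})\le\exp(\langle\phi\rangle e^{\vartheta'})$, and then using $n^n/n!\le e^n$, yields
\[
\frac{t^n}{n!}\,\big\|(L^\Delta)^n k_0\big\|_{\vartheta'}\le\Big(\tfrac{t}{\tau(\vartheta)}\Big)^{n}\|k_0\|_\vartheta ,
\]
a convergent geometric series exactly for $t<\tau(\vartheta)=T(\vartheta',\vartheta)$. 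This gives $k_t\in\mathcal{K}_{\vartheta'}$ and, summing, $\|k_t\|_{\vartheta'}\le(1-t/\tau(\vartheta))^{-1}\|k_0\|_\vartheta$.

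It remains to check that $t\mapsto k_t$ really solves (\ref{24c}) and is unique. For fixed $t<\tau(\vartheta)$, since $s\mapsto T(s,\vartheta)$ is continuous with maximum $\tau(\vartheta)$ attained at $\vartheta'$, I can choose an intermediate index $\sigma$ with $\vartheta<\sigma<\vartheta'$ and $t<T(\sigma,\vartheta)$; the same estimate run up to $\sigma$ shows $k_t\in\mathcal{K}_\sigma$ with $\sigma<\vartheta'$. Then $L^\Delta k_t=L^\Delta_{\vartheta'\sigma}k_t$ is well defined in $\mathcal{K}_{\vartheta'}$ by Lemma \ref{1lm}, and term-by-term differentiation (legitimate on the smaller window $[\vartheta,\sigma]$ by the geometric bound) gives $d k_t/dt=L^\Delta k_t$ in $\mathcal{K}_{\vartheta'}$, with $k_0$ the initial value and both $t\mapsto k_t$ and $t\mapsto d k_t/dt$ continuous. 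For uniqueness, the difference $w_t$ of two solutions obeys $w_0=0$ and $w_t=\int_0^t L^\Delta w_\tau\,d\tau$; iterating this identity $n$ times across a subdivision of the scale produces the simplex volume $t^n/n!$ times the $n$-fold operator-norm product bounded as above, and letting $n\to\infty$ forces $w_t\equiv0$ on each $[0,t_*]$, $t_*<\tau(\vartheta)$.

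I expect the main obstacle to be precisely the unbounded, index-raising nature of $L^\Delta$: no single Banach space is invariant, so every application of $L^\Delta$ consumes part of the gap $\delta(\vartheta)$, and one must track the intermediate indices so that the geometric series closes exactly at the optimized horizon $\tau(\vartheta)$ of (\ref{N1})--(\ref{N2}) and so that the formal derivative of the series still lands back in $\mathcal{K}_{\vartheta'}$ --- the latter being the reason one must first exhibit $k_t$ in a space strictly below $\vartheta'$.
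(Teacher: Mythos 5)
Your proposal is correct and follows essentially the same route as the paper: the Ovsyannikov-type exponential series with $(L^\Delta)^n$ factored through $n$ equally spaced intermediate indices, the bound (\ref{24}) giving the factor $n/eT(\vartheta',\vartheta)$ per step and hence a geometric series up to the optimized horizon $\tau(\vartheta)$, term-by-term differentiation via an intermediate index, and uniqueness by iterating the integral identity. The only cosmetic difference is that the paper places the auxiliary index above $\vartheta'$ rather than below it when justifying the derivative.
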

\begin{proof}
Take $T <  \tau(\vartheta)$ and then pick $\vartheta' \in
(\vartheta, \vartheta + \delta(\vartheta))$ such that $ T<
T(\vartheta', \vartheta)$. Let $\mathcal{L}(\mathcal{K}_{\vartheta},
\mathcal{K}_{\vartheta'})$ stand for the Banach space of bounded
linear operators acting from $\mathcal{K}_{\vartheta}$ to
$\mathcal{K}_{\vartheta'}$ equipped with the corresponding operator
norm. Our aim is to construct the family
\begin{equation}
  \label{40c}
S_{\vartheta'\vartheta} (t) \in \mathcal{L}(\mathcal{K}_{\vartheta},
\mathcal{K}_{\vartheta'}), \qquad t\in [0, T ( \vartheta',
\vartheta)),
\end{equation}
defined by the series
\begin{equation}
  \label{40b}
S_{\vartheta'\vartheta} (t) = \sum_{n=0}^\infty \frac{t^n}{n!}
\left( L^{ \Delta}\right)^n_{\vartheta'\vartheta}.
\end{equation}
In (\ref{40b}), $\left( L^{\Delta}\right)^0_{\vartheta'\vartheta}$
is the embedding operator and
\begin{equation}
  \label{40d}
\left( L^{\Delta}\right)^n_{\vartheta'\vartheta} := \prod_{l=1}^n
L^{\Delta}_{\vartheta_l \vartheta_{l-1}}, \quad \vartheta_l =
\vartheta + l(\vartheta'- \vartheta)/n,
\end{equation}
for $n\in \mathds{N}$. Now we take into account that $\vartheta_l -
\vartheta_{l-1}= (\vartheta'- \vartheta)/n$ and that $L^{\Delta}$
satisfies (\ref{24}). Then we get
\begin{equation}
  \label{DA}
\|L^{\Delta}_{\vartheta_l \vartheta_{l-1}}\|  \leq
\left(\frac{n}{e}\right)(\vartheta' - \vartheta)\left\{ 2 \alpha
\exp\left( \langle \phi \rangle e^{\vartheta'}\right)\right\}^{-1}\\[.2cm]
 \leq  n \big{/}e T (\vartheta', \vartheta),
\end{equation}
see (\ref{24}) and (\ref{21a}). Next we apply (\ref{DA}) in
(\ref{40d}) and conclude that the series in (\ref{40b}) converges in
the operator norm, uniformly on $[0,T]$, to the operator-valued
function $[0,T] \ni t \mapsto S_{\vartheta'\vartheta} (t) \in
\mathcal{L}(\mathcal{K}_{\vartheta}, \mathcal{K}_{\vartheta'})$ such
that
\begin{equation}
  \label{51}
\forall t\in [0,T]\qquad \|S_{\vartheta'\vartheta} (t) \| \leq
\frac{T (\vartheta', \vartheta)}{T (\vartheta', \vartheta) - t}.
\end{equation}
Likewise, for $\vartheta'' \in (\vartheta' , \vartheta +\delta
(\vartheta)]$, we get
\begin{eqnarray}
  \label{52}
 \frac{d}{dt} S_{\vartheta''\vartheta} (t) & = &
 \sum_{n=0}^\infty \frac{t^n}{n!} \left( L^{
\Delta}\right)^{n+1}_{\vartheta''\vartheta}\\[.2cm] & = & \sum_{n=0}^\infty
\frac{t^n}{n!} L^{\Delta}_{\vartheta''
 \vartheta'} \left( L^{\Delta}\right)^n_{\vartheta'\vartheta} = L^{\Delta}_{\vartheta''
 \vartheta'} S_{\vartheta'\vartheta} (t), \ \quad  t\in [0,T] \nonumber
\end{eqnarray}
Then
\begin{equation}
  \label{53}
 k_{t} = S_{\vartheta'\vartheta} (t) k_{0} \in
\mathcal{K}_{\vartheta'} \subset
\mathcal{D}(L^{\Delta}_{\vartheta''}),
\end{equation}
see Lemma \ref{1lm}, is a solution of (\ref{24c}) on the time
interval $[0, \tau(\vartheta))$ since $T< \tau(\vartheta)$ has been
taken in an arbitrary way.

Let us prove that the solution given in (\ref{53}) is unique. In
view of the linearity, to this end it is enough to show that the
problem in (\ref{24c}) with the zero initial condition has a unique
solution. Assume that $v_t\in \mathcal{D}(L^{\Delta}_{\vartheta +
\delta(\vartheta)})$ is one of the solutions. Then $v_t$ lies in
$\mathcal{K}_{\vartheta''}$ for each $\vartheta'' > \vartheta +
\delta(\vartheta)$, see (\ref{19}). Fix any such $\vartheta''$ and
then take $t < \tau(\vartheta)$ such that $t< T (\vartheta'',
\vartheta+ \delta(\vartheta))$. Then, cf. (\ref{24a}),
\begin{eqnarray*}
v_t & = & \int_0^t L^{\Delta}_{\vartheta''
 \bar{\vartheta}} v_s d s \\[.2cm]
& = & \int_0^t \int_0^{t_1} \cdots \int_0^{t_{n-1}} \left(
L^{\Delta}\right)^n_{\vartheta''\bar{\vartheta}} v_{t_n} d t_n
\cdots d t_1, \nonumber
\end{eqnarray*}
where $\bar{\vartheta} := \vartheta + \delta(\vartheta)$ and $n\in
\mathds{N}$ is an arbitrary number. Similarly as above we get from
the latter
\[
\|v_t \|_{\vartheta''} \leq \frac{t^n}{n!} \left(\frac{n}{e T
(\vartheta'',\bar{ \vartheta})}\right)^n \sup_{s\in
[0,t]}\|v_s\|_{\bar{\vartheta}}.
\]
Since $n$ is an arbitrary number, this yields $v_s =0$ for all $s\in
[0,t]$. The extension of this result to all $t <\tau (\vartheta)$
can be done by repeating this procedure due times.
\end{proof}
\begin{remark}
  \label{Decrk}
Similarly as in obtaining (\ref{52}) we have that for  all
$\vartheta_0, \vartheta_1, \vartheta_2 \in \mathds{R}$ such that
$\vartheta_0 < \vartheta_1 < \vartheta_2$, the following holds
\begin{eqnarray}
  \label{D1}
& & \quad S_{\vartheta_2 \vartheta_0} (t+s) = S_{\vartheta_2
\vartheta_1} (t) S{\vartheta_1
\vartheta_0} (s), \\[.2cm] & &  t\in [0, T(\vartheta_2, \vartheta_1)), \quad s \in [0,
T(\vartheta_1, \vartheta_0)). \nonumber
\end{eqnarray}
\end{remark}

\subsection{The identification lemma}

Here we show that the solution of (\ref{24c}) given in (\ref{53})
has the property $k_t \in \mathcal{K}^\star_{\vartheta}$, see
(\ref{19a}). To some extent, we  follow the way of proving Theorem
3.7 in \cite{BKKK}. However, due to an elegant argument provided by
the Denjoy-Carleman theorem \cite{DC}, the present proof is more
complete and transparent.
\begin{lemma}
  \label{Id1lm}
Let $\vartheta^*$ be as in Corollary \ref{Gco}. Then for each $t\in
[0, T(\vartheta, \vartheta^*))$, the operator defined in (\ref{40b})
has the property
\begin{equation}
  \label{63}
  S_{\vartheta \vartheta^*}(t):\mathcal{K}^\star_{\vartheta^*} \to
  \mathcal{K}^\star_{\vartheta}.
\end{equation}
\end{lemma}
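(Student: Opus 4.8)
The plan is to reduce the cone-preservation property (\ref{63}) to a positivity statement for a \emph{dual} evolution acting on the quasi-observables $G$. By the definition (\ref{19a}) of $\mathcal{K}^\star_\vartheta$, in order to prove (\ref{63}) it is enough to show that $\langle \! \langle G, S_{\vartheta\vartheta^*}(t) k \rangle \!\rangle \geq 0$ for every $G\in B^\star_{\rm bs}(\Gamma_0)$ and every $k\in \mathcal{K}^\star_{\vartheta^*}$. I would first introduce the operator $\widehat{L}$ dual to $L^\Delta$ relative to the pairing (\ref{9f}), characterized by $\langle \! \langle \widehat{L} G, k \rangle \!\rangle = \langle \! \langle G, L^\Delta k \rangle \!\rangle$. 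In view of (\ref{9e}), (\ref{1g}), and the fact that $F_\theta = K e(\cdot;\theta)$, this $\widehat{L}$ is just the image $K^{-1} L K$ of the Kawasaki generator (\ref{LF}) under the $K$-transform (\ref{9a}). The reason for passing to $\widehat{L}$ is that the cone $B^\star_{\rm bs}$ is defined in (\ref{9g}) precisely by the \emph{pointwise} condition $KG\geq 0$, and the Kawasaki dynamics, being a Markov jump dynamics on configurations, preserves pointwise nonnegativity of functions on $\Gamma$.

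Next I would construct a dual evolution $t\mapsto G_t$ solving $\tfrac{d}{dt} G_t = \widehat{L} G_t$, $G_0 = G$, on a suitable predual scale, and establish the conjugacy
\[
\langle \! \langle G, S_{\vartheta\vartheta^*}(t) k \rangle \!\rangle = \langle \! \langle G_t, k \rangle \!\rangle .
\]
Granting this identity, the right-hand side is nonnegative as soon as $G_t$ lies in (the closure of) $B^\star_{\rm bs}$, because $k\in\mathcal{K}^\star_{\vartheta^*}$. To see that $G_t$ stays in the cone, I would use finite-volume approximations: replacing $\mathds{R}^d$ by a bounded $\Lambda$ and truncating the jump kernel turns $L$ into a bounded generator $L_\Lambda$ of a genuine Markov jump process on configurations in $\Lambda$, so that $e^{t L_\Lambda}$ sends nonnegative functions to nonnegative functions. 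The corresponding $G_t^\Lambda := K^{-1} e^{t L_\Lambda} K G$ then satisfies $K G_t^\Lambda = e^{t L_\Lambda}(KG)\geq 0$, i.e. $G_t^\Lambda\in B^\star_{\rm bs}$, and sending $\Lambda\uparrow\mathds{R}^d$ places $G_t$ in the closed cone.

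The main obstacle is the identification step: showing that the series-defined operator $S_{\vartheta\vartheta^*}(t)$ of (\ref{40b}) is genuinely conjugate to the dual evolution obtained as the limit of the cone-preserving finite-volume dynamics, i.e. that these two a priori distinct constructions yield the same $\langle \! \langle G, k_t \rangle \!\rangle$. Here I would exploit that, by the norm bound (\ref{24}) and the chain estimate (\ref{DA}), all $t$-derivatives of $t\mapsto \langle \! \langle G, S_{\vartheta\vartheta^*}(t) k \rangle \!\rangle$ are governed by the quantities $\langle \! \langle G, (L^\Delta)^n k \rangle \!\rangle$, whose growth along the scale is of factorial type $n!\,C^n$. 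This places the relevant functions of $t$ in a quasi-analytic Denjoy--Carleman class on $[0, T(\vartheta,\vartheta^*))$. The finite-volume functions $\langle \! \langle G, S^\Lambda(t) k \rangle \!\rangle$ lie in the same class and, in the limit $\Lambda\uparrow\mathds{R}^d$, share all derivatives at $t=0$ with the series construction.

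Quasi-analyticity, via the Denjoy--Carleman theorem \cite{DC}, then forces the two families to coincide on the whole interval, so that the nonnegativity valid at finite volume survives in the limit and yields $\langle \! \langle G, S_{\vartheta\vartheta^*}(t) k \rangle \!\rangle \geq 0$. This uniqueness-by-quasi-analyticity is exactly the shortcut that replaces the more laborious direct estimates of \cite{BKKK}; accordingly, the delicate point is to pin down the factorial growth of the derivatives precisely enough that the Denjoy--Carleman criterion applies, and to carry the finite-volume limit through the pairing (\ref{9f}) so that the cone condition (\ref{9g}) is preserved in the limit.
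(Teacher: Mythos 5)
Your overall architecture --- approximate by a finite system whose dynamics is a genuine positivity--preserving Markov evolution, then use quasi-analyticity and the Denjoy--Carleman theorem to identify that evolution with the series-defined $S_{\vartheta\vartheta^*}(t)$ --- is exactly the skeleton of the paper's proof. But you have placed the truncation in the wrong object, and this breaks the Denjoy--Carleman step. You truncate the \emph{generator} (restrict to $\Lambda$, cut the kernel) and evolve the quasi-observable by $\widehat{L}_\Lambda = K^{-1}L_\Lambda K$. Then the derivatives at $t=0$ of your finite-volume function are $\langle \! \langle \widehat{L}_\Lambda^n G, k \rangle \!\rangle$, while those of $t\mapsto \langle \! \langle G, S_{\vartheta\vartheta^*}(t)k \rangle \!\rangle$ are $\langle \! \langle G, (L^\Delta)^n k \rangle \!\rangle$; these do \emph{not} coincide for any fixed $\Lambda$, and equality of all derivatives at the initial point is precisely the hypothesis the Denjoy--Carleman uniqueness argument needs. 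Your phrase that the two families ``in the limit $\Lambda\uparrow\mathds{R}^d$ share all derivatives at $t=0$'' concedes this, but then the identification you want is no longer a Denjoy--Carleman statement: it is the convergence of semigroups with varying generators, with control uniform enough to pass positivity to the limit --- which is the hard part you were trying to avoid, not a ``standard'' limiting transition.

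The paper sidesteps this by truncating the \emph{initial state} instead: it cuts $R^{\Lambda,N}_0 = R^\Lambda_{\mu_0} I_N$ in (\ref{64}) so that the initial datum is an honest probability density on $\Gamma_0$, and then evolves it by the \emph{full} generator $L^\dagger$ of (\ref{66a}), which by Thieme--Voigt generates a stochastic $C_0$-semigroup on $L^1(\Gamma_0,d\lambda)$; positivity and $\|R^{\Lambda,N}_t\|_{\mathcal R}\le 1$ are then automatic, giving $\langle \! \langle KG, R^{\Lambda,N}_t \rangle \!\rangle \ge 0$ directly from $KG\ge 0$. Because the same (truncated) datum is fed into both evolutions and $L^\Delta$ and $L^\dagger$ are exactly intertwined by the map $R\mapsto k$ of (\ref{69}), all derivatives at $t=0$ of $\varphi_G$ and $\psi_G$ in (\ref{72}) agree \emph{exactly} for each fixed $\Lambda,N$, the factorial bounds (\ref{73}), (\ref{73a}) give quasi-analyticity, and Denjoy--Carleman yields $\psi_G=\varphi_G\ge 0$ on the whole interval; the remaining limit in (\ref{71})(ii) concerns only the initial data propagated through the fixed bounded operators $S_{\vartheta\vartheta^*}(t)$, which is the easy part. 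Two secondary points in your sketch also need repair: the restricted generator is not bounded (the total jump rate grows like $\alpha|\gamma|$; only particle-number conservation saves you sector by sector, which is why the cutoff $I_N$ appears), and the evolved quasi-observable $G_t$ leaves $B_{\rm bs}(\Gamma_0)$, so pairing it against $k\in\mathcal{K}^\star_{\vartheta^*}$ requires a cone-closure argument that (\ref{19a}) does not hand you for free.
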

\begin{proof}
Let $\mu_0\in \mathcal{P}_{\rm exp} (\Gamma)$ be such that
$k_{\mu_0} \in \mathcal{K}_{\vartheta^*}^\star$, see Proposition
\ref{Gpn}. For $\Lambda\in \mathcal{B}_{\rm b}(\mathds{R}^d)$,  let
$\mu^\Lambda_0$ and $R^\Lambda_{\mu_0}$ be as in (\ref{9c}). For
$N\in \mathds{N}$, we then set
\begin{equation}
  \label{64}
R^{\Lambda,N}_0 (\eta) = R^\Lambda_{\mu_0} (\eta) I_N (\eta), \qquad
\eta \in \Gamma_0,
\end{equation}
where $I_N (\eta)=1$ whenever $|\eta| \leq N$ and
$I_N (\eta)=0$ otherwise. Set
\begin{eqnarray}
  \label{65}
\mathcal{R} & = & L^1 (\Gamma_0, d \lambda) , \quad
\mathcal{R}_\beta = L^1 (\Gamma_0, b_\beta d \lambda), \\[.2cm]
b_\beta (\eta) & := & \exp\bigg{(} \beta |\eta|\bigg{)}, \qquad \beta >0.\nonumber
\end{eqnarray}
Let $\|\cdot\|_{\mathcal{R}}$ and $\|\cdot\|_{\mathcal{R}_\beta}$ be
the norms of the spaces introduced in (\ref{65}) and $\mathcal{R}^+$
and $\mathcal{R}^+_\beta$ be the corresponding cones of positive
elements. For each $\beta>0$, $R^{\Lambda,N}_0$ defined in
(\ref{64}) lies in $\mathcal{R}_\beta^+ \subset \mathcal{R}^+$ and
is such that $\|R^{\Lambda,N}_0\|_{\mathcal{R}} \leq 1$. By means of
perturbative methods developed in \cite{TV}, see \cite[Section
3.2]{BKKK}, it is possible to show that $L^*$ related by (\ref{1b})
to $L$ given in (\ref{LF}) generates the evolution of states $\mu_0
\mapsto \mu_t$, $t\geq 0$, whenever $\mu_0$ has the property
$\mu_0(\Gamma_0)=1$, which is the case for $\mu^\Lambda_0$.
Moreover, for each $t\geq 0$, the mentioned $\mu_t$ is absolutely
continuous with respect to $\lambda$, and the equation for $R_t =
d\mu_t / d \lambda$ corresponding to (\ref{1a}) can be written in
the form
\begin{equation}
  \label{66}
\frac{d}{dt} R_t = L^\dagger R_t, \qquad R_t|_{t=0}=R_{\mu_0},
\end{equation}
where, cf. (\ref{15}), $L^\dagger$ is defined by the relation
$L^\dagger R = d(L^* \mu)/d\lambda$, and hence acts according to the
following formula
\begin{gather}
  \label{66a}
( L^\dagger R)(\eta) =  \sum_{y\in \eta} \int_{\mathds{R}^d} a (x-y)
e(\tau_y;\eta) R(\eta\setminus y
\cup x, \eta) d x -  \Psi(\eta) R(\eta), \nonumber\\[.2cm]
\Psi(\eta) :=  \sum_{x\in \eta} \int_{\mathds{R}^d} a (x-y)
e(\tau_y;\eta) d y .
\end{gather}
Like in the proof of \cite[Theorem 3.7]{BKKK}, one shows that
$L^\dagger$ generates a stochastic $C_0$-semigroup, $S_R:=
\{S_R(t)\}_{t\geq 0}$, on $\mathcal{R}$, which leaves invariant each
$\mathcal{R}_\beta$, $\beta >0$. Then the solution of (\ref{66}) is
$R_t = S_R (t) R_0$. For $R_0^{\Lambda,N}$ as in (\ref{64}), we then
set
\begin{equation}
  \label{67}
R_t^{\Lambda,N} (t) = S_R(t) R_0^{\Lambda,N}, \qquad t>0.
\end{equation}
Then $R_t^{\Lambda,N}\in \mathcal{R}_\beta^+ \subset \mathcal{R}^+$
and $\|R^{\Lambda,N}_t\|_{\mathcal{R}} \leq 1$. This yields that,
for each $G\in B_{\rm bs}^\star (\Gamma_0)$, see (\ref{9f}) and
(\ref{9g}), the following holds
\begin{equation}
  \label{68}
 \langle \! \langle KG , R^{\Lambda,N}_t \rangle\!\rangle \geq 0,
 \qquad t\geq 0.
\end{equation}
The integral in (\ref{68}) exists as $R_t^{\Lambda,N}\in
\mathcal{R}_\beta$ and $KG$ satisfies (\ref{9b}). Moreover, like in
(\ref{24}), for each $\beta'$ such that $0< \beta' < \beta$, we
derive from (\ref{66a}) the following estimate
\begin{equation*}
\|L^\dagger R\|_{\mathcal{R}_{\beta'}} \leq \frac{2\alpha \|
R\|_{\mathcal{R}_{\beta}} }{e(\beta - \beta')}.
\end{equation*}
This allows us to define the corresponding bounded operators
$(L^\dagger)^n_{\beta'\beta} : \mathcal{R}_{\beta} \to
\mathcal{R}_{\beta'}$, $n\in \mathds{N}$, cf. (\ref{40d}), the norms
of which satisfy
\begin{equation}
  \label{68b}
\|(L^\dagger)^n_{\beta'\beta} \| \leq n^n \left(
e\bar{T}(\beta,\beta')\right)^{-n}.
\end{equation}
On the other hand, we have that, cf. (\ref{9d}) and (\ref{64}),
\begin{equation}
  \label{69}
k_0^{\Lambda,N} (\eta) :=  \int_{\Gamma_0} R^{\Lambda,N}_0
(\eta\cup\xi)\lambda(d\xi)
\end{equation}
is such that $k_0^{\Lambda,N}\in \mathcal{K}_{\vartheta^*}^\star$,
and hence we may get
\begin{equation}
  \label{70}
k_t^{\Lambda,N} = S_{\vartheta \vartheta^*}(t) k_0^{\Lambda,N},
\qquad t\in [0,T(\vartheta, \vartheta^*)),
\end{equation}
where $S_{\vartheta \vartheta^*}(t)$ is given in (\ref{40b}).
 Then the proof of (\ref{63}) consists in showing:
\begin{eqnarray}
  \label{71}
 &(i) & \quad \forall G\in B^\star_{\rm bs}(\Gamma_0) \qquad
 \langle \! \langle G, k^{\Lambda,N}_t \rangle \!\rangle \geq
 0;\\[.2cm]
 &(ii)& \quad  \langle \!\langle G, S^1_{\vartheta \vartheta^*} (t) k_0 \rangle \!\rangle =
 \lim_{\Lambda \to \mathds{R}^d } \lim_{N\to +\infty} \langle \! \langle G, k^{\Lambda,N}_t \rangle \!\rangle .\nonumber
\end{eqnarray}
To prove claim {\it (i)} of (\ref{71}) for $G\in B^\star_{\rm
bs}(\Gamma_0)$, cf. (\ref{9g}), we set
\begin{equation}
\label{72} \varphi_G (t) = \langle \! \langle KG , R^{\Lambda,N}_t
\rangle\!\rangle, \quad \ \  \psi_G (t) = \langle \! \langle G ,
k^{\Lambda,N}_t \rangle\!\rangle,
\end{equation}
where $\psi_G$ is defined for $t$ as in (\ref{70}). For a given
$t\in (0,T (\vartheta, \vartheta^*))$, we pick
$\vartheta'<\vartheta$ such that $t < T (\vartheta', \vartheta^*)$,
and hence $k_{s}^{\Lambda,N} \in \mathcal{K}_{\vartheta'}$ for $s\in
[0,t]$. Then the  direct calculation based on (\ref{52}) yields for
the $n$-th derivative
\[
\psi_G^{(n)} (t) = \langle \! \langle G, (L^\Delta)^n_{\vartheta
\vartheta'} k^{\Lambda,N}_t \rangle \!\rangle, \qquad n \in
\mathds{N}.
\]
As in obtaining (\ref{51}) we then get from the latter
\begin{equation}
  \label{73}
|\psi_G^{(n)} (t) | \leq A^n n^n C_{\vartheta'}(G) \sup_{s\in
[0,t]}\|k^{\Lambda,N}_s\|_{\vartheta'}.
\end{equation}
Here $A= 1/ e T(\vartheta, \vartheta')$ and
\[
C_{\vartheta'}(G)  = \int_{\Gamma_0} |G(\eta)|\exp\left( \vartheta '
|\eta|\right) \lambda(d\eta) <\infty,
\]
as $G\in B_{\rm bs}(\Gamma_0)$, see Definition \ref{Gdef}. Likewise,
from  (\ref{67}) we get
\[
\varphi^{(n)}_G (t) = \langle \! \langle KG,
(L^\dagger)^n_{\beta'\beta} R^{\Lambda,N}_t \rangle \! \rangle
\]
For the same $t$ as in (\ref{73}), by (\ref{68b}) we have from the
latter
\begin{equation}
  \label{73a}
|\varphi_G^{(n)} (t) | \leq \bar{A}^n n^n C_{\beta'}(G) \sup_{s\in
[0,t]}\|R^{\Lambda,N}_s\|_{\beta'}.
\end{equation}
Here $\bar{A}= 1/ e \bar{T}(\beta', \beta)$ and
\[
C_{\beta'}(G)  =  \esssup_{\eta \in \Gamma_0} |KG(\eta)| \exp
\left(-\beta' |\eta| \right) <\infty,
\]
which holds in view of (\ref{9b}). By (\ref{15}) and (\ref{69}) it
follows that
\[
(L^\Delta k^{\Lambda,N}_0)(\eta)=\int_{\Gamma^2_0}(L^\dagger
R_0^{\Lambda,N})(\eta \cup \xi) \lambda (d\xi),
\]
which then yields
\begin{equation}
  \label{74}
\forall n\in \mathds{N}_0 \qquad \varphi^{(n)}_G(0) =
\psi^{(n)}_G(0).
\end{equation}
By (\ref{73a}) and (\ref{73}) both functions defined in (\ref{72})
are quasi-analytic on $[0,t]$. Then by the Denjoy-Carleman theorem
\cite{DC}, (\ref{74}) implies, see (\ref{68}),
\begin{equation}
  \label{74a}
 \forall t\in [0,T(\vartheta, \vartheta^*)) \qquad \psi_G(t) = \varphi_G(t) \geq 0,
\end{equation}
which yields the first  line in (\ref{71}).
 The convergence in claim
{\it(ii)} of (\ref{71}) is proved in a standard way, see Appendix in
\cite{BKKK}.
\end{proof}
Note that (\ref{74a}) yields also that
\begin{equation}
  \label{74b}
 \forall t\in [0,T(\vartheta, \vartheta^*)) \qquad \langle
 \!\langle G,  q^{\Lambda,N}_t\rangle \!\rangle
= \langle
 \!\langle G,  k^{\Lambda,N}_t\rangle \!\rangle,
\end{equation}
where $G$ and  $k^{\Lambda,N}_t$ are as in (\ref{72}) and
\begin{equation}
  \label{74c}
q^{\Lambda,N}_t (\eta) := \int_{\Gamma_0^2} R^{\Lambda,N}_t (\eta
\cup \xi) \lambda (d\xi).
\end{equation}

\subsection{An auxiliary evolution}

The evolution which we construct now  will be used to extending the
solution $k_t$ given in (\ref{53}) to the global solution as stated
in Theorem \ref{1tm}. The construction employs the operator
\begin{equation}
  \label{55}
(\bar{L} k) (\eta)  =  \sum_{y\in \eta}
\int_{\mathds{R}^d} a (x-y) k(\eta \setminus y \cup x) d
x
\end{equation}
obtained from $L^\Delta$ given in (\ref{15}) by putting $\phi =0$, and then dropping the second term. Hence, like
in (\ref{24}) we get
\begin{equation}
  \label{56}
\| \bar{L}k\|_\vartheta \leq  \frac{2 \alpha \|k\|_{\vartheta''}}{
e(\vartheta - \vartheta'')},
\end{equation}
which allows us to introduce the operators $(\bar{L}_\vartheta,
\mathcal{D} (\bar{L}_\vartheta))$ and
$\bar{L}_{\vartheta\vartheta''}\in
\mathcal{L}(\mathcal{K}_{\vartheta''}, \mathcal{K}_{\vartheta})$
such that, cf. (\ref{24a}),
\begin{equation*}
\forall k \in \mathcal{\vartheta''} \qquad
\bar{L}_{\vartheta\vartheta''}k = \bar{L}_{\vartheta} k, \qquad
\vartheta'' < \vartheta.
\end{equation*}
Like above, we have that
\[
\mathcal{K}_{\vartheta''} \subset \mathcal{D}(\bar{L}_{\vartheta}):=
\{ k \in \mathcal{K}_{\vartheta} : \bar{L} k \in
\mathcal{K}_{\vartheta}\}, \qquad \vartheta'' < \vartheta  .
\]
Note that
\begin{equation}
  \label{58}
\bar{L}_{\vartheta\vartheta''} :\mathcal{K}_{\vartheta''}^+ \to
\mathcal{K}_{\vartheta}^+ , \qquad \vartheta'' < \vartheta,
\end{equation}
see (\ref{19b}).  For $n\in \mathds{N}$, we define
$(\bar{L})^n_{\vartheta'\vartheta}$ similarly as in (\ref{40d}) and
denote, cf. (\ref{21a}),
\begin{equation}
  \label{59}
\bar{T}(\vartheta', \vartheta) = (\vartheta' - \vartheta)/2 \alpha,
\qquad \vartheta < \vartheta' .
\end{equation}
Our aim is to study the operator valued function defined by the
series
\begin{equation}
  \label{61}
\bar{S}_{\vartheta'\vartheta} (t) = \sum_{n=0}^\infty \frac{t^n}{n!}
\left(\bar{L}\right)^n_{\vartheta'\vartheta}.
\end{equation}
\begin{lemma}
  \label{A1lm}
For each $\vartheta_0, \vartheta\in \mathds{R}$ such that
$\vartheta_0 < \vartheta$, the series in (\ref{61}) defines a
continuous function
\begin{equation}
  \label{DD}
[0, \bar{T}(\vartheta, \vartheta_0) ) \ni t \mapsto
\bar{S}_{\vartheta\vartheta_0} (t) \in \mathcal{L}(
\mathcal{K}_{\vartheta_0}, \mathcal{K}_{\vartheta}),
\end{equation}
which has the following properties:
\begin{itemize}
  \item[{\it (a)}] For $t$ as in (\ref{DD}), let $\vartheta''\in (\vartheta_0, \vartheta)$ be such that
 $t< \bar{T}(\vartheta'',
\vartheta_0)$. Then, cf. (\ref{52}),
\begin{equation}
  \label{D2}
\frac{d}{dt}\bar{S}_{\vartheta\vartheta_0}(t) = \bar{L}_{\vartheta
\vartheta''} \bar{S}_{\vartheta''\vartheta_0} (t).
\end{equation}
  \item[{\it (b)}]
The problem
\begin{equation}
  \label{60}
 \frac{d}{dt} u_t = \bar{L}_\vartheta u_t , \qquad u_t|_{t=0} = u_0
 \in \mathcal{K}^+_{\vartheta_0},
\end{equation}
has a unique solution $u_t \in \mathcal{K}^+_{\vartheta}$ on the
time interval $[0, \bar{T}(\vartheta, \vartheta_0))$ given by
\begin{equation}
  \label{62}
  u_t = \bar{S}_{\vartheta''\vartheta_0} (t)u_0,
\end{equation}
where, for a fixed $t \in [0, \bar{T}(\vartheta, \vartheta_0))$,
$\vartheta''$ is chosen to satisfy $t< \bar{T}(\vartheta'',
\vartheta_0)$.
\end{itemize}
\end{lemma}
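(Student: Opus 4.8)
The plan is to transcribe the finite-horizon construction of Lemma \ref{3tm} to the operator $\bar{L}$ in (\ref{55}), exploiting two features specific to it: first, $\bar{L}$ preserves the cardinality of configurations, so the relevant lifetime $\bar{T}(\vartheta,\vartheta_0)$ in (\ref{59}) carries no factor $\exp(\langle\phi\rangle e^{\vartheta})$ and is merely linear in $\vartheta-\vartheta_0$; second, $\bar{L}$ preserves positivity, i.e. (\ref{58}) holds. I would begin, exactly as in the passage leading to (\ref{DA}), by fixing $T<\bar{T}(\vartheta,\vartheta_0)$, introducing the equidistant partition $\vartheta_l=\vartheta_0+l(\vartheta-\vartheta_0)/n$, $l=0,\dots,n$, and reading off from (\ref{56}) the bound $\|\bar{L}_{\vartheta_l\vartheta_{l-1}}\|\leq n/(e\bar{T}(\vartheta,\vartheta_0))$. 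Multiplying these along the analogue of (\ref{40d}) gives $\|(\bar{L})^n_{\vartheta\vartheta_0}\|\leq (n/(e\bar{T}(\vartheta,\vartheta_0)))^n$, whence $\|(t^n/n!)(\bar{L})^n_{\vartheta\vartheta_0}\|\leq (t/\bar{T}(\vartheta,\vartheta_0))^n$ after using $n^n\leq e^n n!$. The geometric series then converges in operator norm uniformly on $[0,T]$, yielding the continuity asserted in (\ref{DD}) together with a bound of the form $\|\bar{S}_{\vartheta\vartheta_0}(t)\|\leq\bar{T}(\vartheta,\vartheta_0)/(\bar{T}(\vartheta,\vartheta_0)-t)$, as in (\ref{51}).

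For property \emph{(a)} I would differentiate (\ref{61}) term by term, which is legitimate because the differentiated series converges uniformly on compact subintervals by the same estimate, and then refactor $(\bar{L})^{n+1}_{\vartheta\vartheta_0}=\bar{L}_{\vartheta\vartheta''}(\bar{L})^n_{\vartheta''\vartheta_0}$ through the intermediate index $\vartheta''$; this reproduces (\ref{D2}) exactly as (\ref{52}) was obtained. For the existence part of \emph{(b)} I set $u_t=\bar{S}_{\vartheta''\vartheta_0}(t)u_0$ with $\vartheta''$ chosen so that $t<\bar{T}(\vartheta'',\vartheta_0)$; then $u_t\in\mathcal{K}_{\vartheta''}\subset\mathcal{D}(\bar{L}_\vartheta)$ by the inclusion $\mathcal{K}_{\vartheta''}\subset\mathcal{D}(\bar{L}_\vartheta)$ established above, and (\ref{D2}) shows $du_t/dt=\bar{L}_{\vartheta\vartheta''}u_t=\bar{L}_\vartheta u_t$, so that $u_t$ solves (\ref{60}). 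This is the verbatim analogue of (\ref{53}).

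The genuinely new point is the invariance of the cone. Since $u_0\in\mathcal{K}^+_{\vartheta_0}$, applying (\ref{58}) successively along the partition shows $(\bar{L})^n_{\vartheta''\vartheta_0}u_0\in\mathcal{K}^+_{\vartheta''}$ for every $n$; because convergence in the weighted-sup norm (\ref{17a}) forces $\lambda$-a.e.\ pointwise convergence, the cone $\mathcal{K}^+_{\vartheta''}$ is closed, and hence the sum $u_t=\bar{S}_{\vartheta''\vartheta_0}(t)u_0$ remains in $\mathcal{K}^+_{\vartheta''}\subset\mathcal{K}^+_\vartheta$. Uniqueness follows the argument of Lemma \ref{3tm}: by linearity it suffices to treat $u_0=0$, and iterating the integral identity $u_t=\int_0^t\bar{L}_{\vartheta''\bar\vartheta}u_s\,ds$ $n$ times, together with the operator bound above, gives $\|u_t\|_{\vartheta''}\leq(t^n/n!)(n/(e\bar{T}(\vartheta'',\bar\vartheta)))^n\sup_{s\leq t}\|u_s\|_{\bar\vartheta}$, which tends to $0$ as $n\to\infty$.

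Finally, to see that the representation (\ref{62}) is well defined independently of the admissible choice of $\vartheta''$, I would invoke the cocycle identity $\bar{S}_{\vartheta_2\vartheta_0}(t+s)=\bar{S}_{\vartheta_2\vartheta_1}(t)\bar{S}_{\vartheta_1\vartheta_0}(s)$, the $\bar{L}$-analogue of (\ref{D1}) in Remark \ref{Decrk}, proved by the same term-by-term manipulation; this reconciles the various representations inside $\mathcal{K}$. I expect no essential obstacle: the analytic core is an exact copy of Lemma \ref{3tm}, the only difference being the simpler time scale $\bar{T}$ stemming from cardinality preservation. The one step deserving care is the propagation of positivity through the infinite series, which I reduce to the single-step inclusion (\ref{58}) combined with the closedness of $\mathcal{K}^+_{\vartheta''}$ in the norm (\ref{17a}).
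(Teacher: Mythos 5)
Your proposal is correct and follows essentially the same route as the paper: the paper's own proof simply says to repeat the argument of Lemma \ref{3tm} using the estimate (\ref{56}) in place of (\ref{24}), obtaining convergence of (\ref{61}), the identity (\ref{D2}), existence and uniqueness as before, and positivity from (\ref{58}) together with the series representation. Your additional remarks --- the closedness of the cone $\mathcal{K}^+_{\vartheta''}$ under convergence in the norm (\ref{17a}) and the cocycle identity ensuring (\ref{62}) is independent of the choice of $\vartheta''$ --- only make explicit what the paper leaves implicit.
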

\begin{proof}
Proceeding as in the proof of Lemma \ref{3tm}, by means of the
estimate in (\ref{56}) we prove the convergence of the series in
(\ref{61}). This allows also for proving (\ref{D2}), which yields
the existence of the solution of (\ref{60}) in the form given in
(\ref{62}). The uniqueness is proved analogously as in the case of
Lemma \ref{3tm}. The stated positivity of $u_t$ follows from
(\ref{61}) and (\ref{58}).
\end{proof}
\begin{corollary}
  \label{D1co}
For a given $C>0$, we let in (\ref{60}) and (\ref{62})  $\vartheta_0
=\log C$ and $u_0 (\eta) = C^{|\eta|}$. Then the unique
solution of (\ref{60}) is
\begin{equation}
  \label{85}
u_t (\eta) = C^{|\eta|} \exp\left\{ t(\alpha |\eta|)\right\}.
\end{equation}
This solution can naturally be continued to all $t>0$ for which it
lies in $\mathcal{K}_{\vartheta(t)}$ with
\begin{equation}
  \label{De}
\vartheta(t) =  \log C + t \alpha.
\end{equation}
\end{corollary}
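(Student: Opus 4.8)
The plan is to bypass the series representation (\ref{61}) for this particular initial datum and instead verify the explicit formula (\ref{85}) by a direct pointwise computation, and then invoke the uniqueness part of Lemma \ref{A1lm} to justify the continuation to all $t>0$. The whole argument rests on the observation that $u_0(\eta)=C^{|\eta|}$ is the correlation function of a homogeneous Poisson state, i.e. it depends on $\eta$ only through its cardinality, and that the jump operator $\bar{L}$ in (\ref{55}) preserves cardinality.

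First I would carry out the key computation. Since $u_t(\eta)=C^{|\eta|}\exp(t\alpha|\eta|)$ depends on $\eta$ only through $|\eta|$, and since the configuration $\eta\setminus y\cup x$ has the same number of points as $\eta$ for Lebesgue-almost every $x$, we have $u_t(\eta\setminus y\cup x)=u_t(\eta)$ for a.e. $x$. Hence, by (\ref{10}),
\[
(\bar{L} u_t)(\eta)=\sum_{y\in\eta}\int_{\mathds{R}^d} a(x-y)\,u_t(\eta)\,dx=\alpha|\eta|\,u_t(\eta).
\]
On the other hand $\frac{d}{dt}u_t(\eta)=\alpha|\eta|\,u_t(\eta)$, so both sides of the equation in (\ref{60}) coincide, and the initial condition $u_0(\eta)=C^{|\eta|}$ holds by construction. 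Thus $\bar{L}$ acts on $u_t$ simply as multiplication by $\alpha|\eta|$, and (\ref{85}) solves the problem.

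Next I would attend to the norm bookkeeping dictated by (\ref{17a}). There $|u_t(\eta)|\exp(-\vartheta|\eta|)=\exp\big(|\eta|(\log C+t\alpha-\vartheta)\big)$, whose essential supremum over $\eta\in\Gamma_0$ is finite (indeed equal to $1$, attained at $\eta=\emptyset$) precisely when $\vartheta\geq \log C+t\alpha=\vartheta(t)$, cf. (\ref{De}). This simultaneously gives $u_0\in\mathcal{K}^+_{\log C}$, as required in Corollary \ref{D1co}, and shows that $u_t\in\mathcal{K}^+_{\vartheta(t)}$ for each $t$, with $\mathcal{K}_{\vartheta(t)}$ the smallest space of the scale containing $u_t$; membership in every larger $\mathcal{K}_\vartheta$ then follows from the embedding (\ref{19}).

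Finally, for the continuation I would fix an arbitrary $t_0>0$ and choose $\vartheta$ large enough that $\bar{T}(\vartheta,\log C)=(\vartheta-\log C)/2\alpha>t_0$, which is possible because $\bar{T}(\vartheta,\log C)\to\infty$ as $\vartheta\to\infty$, see (\ref{59}). By Lemma \ref{A1lm}\emph{(b)} the problem (\ref{60}) with $\vartheta_0=\log C$ has on $[0,\bar{T}(\vartheta,\log C))$ a \emph{unique} solution; the explicitly verified $u_t$ is one such solution on this interval (note $\vartheta(t)<\vartheta$ there), so the two agree, and in particular $u_{t_0}$ is given by (\ref{85}). Since $t_0$ was arbitrary, the formula defines the solution for all $t>0$. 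I expect the only genuinely nontrivial point to be this last matching step — reconciling the finite-horizon solution produced by the convergent series (\ref{61}), which lives in a space $\mathcal{K}_\vartheta$ that shrinks as one demands longer time intervals, with the single globally defined formula (\ref{85}) — but this reduces entirely to the uniqueness already established in Lemma \ref{A1lm}, so no essential obstacle remains.
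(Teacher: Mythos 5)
Your proposal is correct and follows essentially the same route as the paper: the paper's proof likewise rests on the absence of interaction in $\bar{L}$, observing that the equations decouple over particle number so that the translation-invariant datum $C^{|\eta|}$ evolves by multiplication by $\alpha|\eta|$, which is exactly your pointwise computation $(\bar{L}u_t)(\eta)=\alpha|\eta|u_t(\eta)$. Your additional bookkeeping — the identification of $\mathcal{K}_{\vartheta(t)}$ and the matching with the unique series solution on $[0,\bar{T}(\vartheta,\log C))$ using that $\bar{T}(\vartheta,\log C)\to\infty$ — merely makes explicit what the paper leaves implicit in the phrase ``can naturally be continued.''
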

\begin{proof}
In view of the lack of interaction in (\ref{55}), the equations for
particular $u^{(n)}_t$ take the following form
\begin{eqnarray*}
 & &\frac{d}{dt} u_t^{(n)} (x_1, \dots ,x_{n_0})  =\\
 & &\sum_{i=1}^{n} \int_{\mathds{R}^d} a (x- x_i) u_t^{(n)} (x_1,
 \dots , x_{i-1}, x , x_{i+1},\dots,x_n) d x
\quad  n\in  \mathds{N},
\end{eqnarray*}
which for the initial translation invariant $u_0$ yields (\ref{85}).
\end{proof}

\subsection{The global solution}

As follows from Lemmas \ref{3tm} and \ref{Id1lm}, the unique
solution of the problem (\ref{24c}) with $k_{0}\in
\mathcal{K}^\star_{\vartheta^*}$ lies in
$\mathcal{K}_\vartheta^\star$ for $t \in (0, T (\vartheta,
\vartheta^*))$. At the same time, for fixed $\vartheta^*$,  $T
(\vartheta, \vartheta^* )$ is bounded, see (\ref{N1}). This means
that the mentioned solution cannot be directly continued as stated
in Theorem \ref{1tm}. In this subsection,  by a comparison method we
prove that, for $t \in (0, T (\vartheta, \vartheta^*))$, $k_t$
satisfies (\ref{24d}) which is then used to get the continuation in
question, cf. Corollary \ref{D1co}. Recall that the operator $Q_y$,
was introduced in (\ref{13}) and the cone $\mathcal{K}^+_\vartheta$
was defined in (\ref{19b}).
\begin{lemma}
  \label{J1lm}
For each $k_0\in \mathcal{K}_{\vartheta^*}^\star$ and $t \in (0, T
(\vartheta, \vartheta^*))$, $k_t := S_{\vartheta \vartheta^*}(t)k_0$
has the property
\begin{equation*}
\left[k_t - e(\tau_y;\cdot) (Q_y k_t)\right] \in
\mathcal{K}_\vartheta^+,
\end{equation*}
holding for Lebesgue-almost all $y\in \mathds{R}^d$.
\end{lemma}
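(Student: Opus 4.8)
Throughout write $\mathcal{A}_y k := k - e(\tau_y;\cdot)\,(Q_y k)$, so that the claim is $\mathcal{A}_y k_t\in\mathcal{K}^+_\vartheta$ for almost all $y$. The plan is to reduce this to a single pointwise identity valid for every correlation function coming from a genuine nonnegative density, to apply it to the finite-volume approximants $R^{\Lambda,N}_t$, $q^{\Lambda,N}_t$ already built in the proof of Lemma~\ref{Id1lm}, and then to pass to the limit.

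First I would record the algebraic fact on which the positivity rests. Let $R\in\mathcal{R}^+$ and let $q(\eta)=\int_{\Gamma_0}R(\eta\cup\omega)\lambda(d\omega)$ be its correlation function, cf. (\ref{9d}) and (\ref{74c}). Substituting this into the definition (\ref{13}) of $Q_y$, interchanging the $\lambda$-integrations by the Lebesgue--Poisson (Minlos) identity and using $\sum_{\zeta\subset\sigma}e(t_y;\zeta)=\prod_{z\in\sigma}(1+t_y(z))=e(\tau_y;\sigma)$, see (\ref{12}), I obtain
\[
e(\tau_y;\eta)\,(Q_y q)(\eta)=\int_{\Gamma_0}R(\eta\cup\sigma)\,e(\tau_y;\eta\cup\sigma)\,\lambda(d\sigma),
\]
and therefore
\[
(\mathcal{A}_y q)(\eta)=\int_{\Gamma_0}R(\eta\cup\sigma)\bigl[1-e(\tau_y;\eta\cup\sigma)\bigr]\,\lambda(d\sigma).
\]
Because $\phi\geq0$ forces $\tau_y(z)=\exp(-\phi(z-y))\in(0,1]$, hence $e(\tau_y;\cdot)\in(0,1]$, the bracket is nonnegative; since $R\geq0$, the right-hand side is nonnegative pointwise, for those (Lebesgue-almost all) $y$ for which $\tau_y$, $t_y$ and $Q_y$ are defined. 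This is exactly the repulsive character of the kernel (\ref{c}) read off at the level of correlation functions.

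Next I would apply this to the approximants. By the construction in Lemma~\ref{Id1lm}, $R^{\Lambda,N}_t=S_R(t)R^{\Lambda,N}_0\in\mathcal{R}^+_\beta$ for every $\beta>0$, since $S_R$ is stochastic and leaves each $\mathcal{R}_\beta$ invariant; its correlation function is precisely $q^{\Lambda,N}_t$ of (\ref{74c}), which lies in a suitable $\mathcal{K}_{\vartheta'}$ so that $Q_y q^{\Lambda,N}_t$ makes sense. The identity above then gives $\mathcal{A}_y q^{\Lambda,N}_t\in\mathcal{K}^+_\vartheta$, i.e. pointwise positivity of the approximants, for almost all $y$.

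Finally I would pass to the limit $N\to+\infty$, $\Lambda\to\mathds{R}^d$. To conclude $\mathcal{A}_y k_t\in\mathcal{K}^+_\vartheta$ it suffices, by (\ref{19b}), to prove $\langle\!\langle G,\mathcal{A}_y k_t\rangle\!\rangle\geq0$ for every nonnegative $G\in B_{\rm bs}(\Gamma_0)$ and almost all $y$. Writing $\langle\!\langle G,\mathcal{A}_y k\rangle\!\rangle=\langle\!\langle \mathcal{A}^\star_y G, k\rangle\!\rangle$ with the $\langle\!\langle\cdot,\cdot\rangle\!\rangle$-adjoint $\mathcal{A}^\star_y$ and invoking (\ref{74b}) together with the convergence of claim {\it(ii)} in (\ref{71}), one transports $\langle\!\langle G,\mathcal{A}_y q^{\Lambda,N}_t\rangle\!\rangle\geq0$ onto the limit. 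I expect this last transfer to be the main obstacle: the positivity of the approximants is pointwise, whereas the convergence behind (\ref{74b})--(\ref{71}) is only weak, so one must check that it extends from the cone $B^\star_{\rm bs}(\Gamma_0)$ to all nonnegative $G\in B_{\rm bs}(\Gamma_0)$ (equivalently that $\mathcal{A}^\star_y$ carries the relevant cone into itself) and that the $y$-dependence is controlled so that the exceptional null sets can be chosen uniformly. By contrast, the pointwise identity of the second paragraph is a routine Lebesgue--Poisson computation, exactly in the spirit of the proof of Lemma~\ref{1lm}.
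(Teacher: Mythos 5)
Your argument is correct and is essentially the paper's own proof: the paper likewise reduces to the finite-volume densities $R^{\Lambda,N}_t$, uses the identity $\sum_{\zeta\subset\xi}e(t_y;\zeta)=e(\tau_y;\xi)$ together with $0<e(\tau_y;\cdot)\le 1$ to obtain positivity of the approximants, and then passes to the weak limit as in (\ref{71}); the only cosmetic difference is that the paper splits $k_t-e(\tau_y;\cdot)\,Q_yk_t$ into $[k_t-Q_yk_t]+[1-e(\tau_y;\cdot)]\,Q_yk_t$ and verifies each piece under the pairing with a nonnegative $G$, whereas you package the same Lebesgue--Poisson computation into a single pointwise identity for $\mathcal{A}_yq$. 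The transfer issue you flag at the end is present in the paper's proof as well (its auxiliary $\widetilde G$ likewise fails to have bounded support) and is handled there by the same appeal to the limiting argument of \cite{BKKK}, so this does not constitute a gap relative to the paper's own standard of rigor.
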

\begin{proof}
 For a fixed $y$, we denote
\[
v_{t,1} = k_t - Q_y k_t, \quad v_{t,2} =  [1- e(\tau_y;\cdot)]
Q_y k_t.
\]
The proof will be done if we show that, for all $G\in B_{\rm
bs}(\Gamma_0)$ such that $G(\eta) \geq 0$ for $\lambda$-almost all
$\eta\in \Gamma_0$, the following holds
\begin{equation}
  \label{76}
\langle \! \langle G, v_{t,j} \rangle \! \rangle \geq 0, \qquad
j=1,2.
\end{equation}
Let $\Lambda$, $N$, and $k_0^{\Lambda,N}$ be as in (\ref{69}), and
then $k_t^{\Lambda,N}$ be as in (\ref{70}). Next, let
$v_{t,j}^{\Lambda,N}$, $j=1,2$, be defined as above with $k_t$
replaced by $k_t^{\Lambda,N}$. By (\ref{74b}) and (\ref{74c}) we
then get
\begin{eqnarray}
  \label{78}
\langle \! \langle G, Q_y k^{\Lambda,N}_t \rangle \! \rangle & = &
\int_{\Gamma_0} \widetilde{G}(\eta) k^{\Lambda,N}_t (\eta)
\lambda(d \eta) \\[.2cm]
& = & \int_{\Gamma_0} \int_{\Gamma_0} \widetilde{G}(\eta)
R^{\Lambda,N}_t (\eta \cup \xi) \lambda (d\eta) \lambda (d\xi),
\nonumber
\end{eqnarray}
where
\begin{equation*}
  \widetilde{G}(\eta):=\sum_{\xi\subset \eta} e(t_y; \xi) G(\eta\setminus
\xi).
\end{equation*}
Furthermore, by (\ref{78}) we get
\begin{eqnarray}
  \label{78b}
& & \langle \! \langle G, Q_y k^{\Lambda,N}_t \rangle \! \rangle \\[.2cm] & & \quad =
\int_{\Gamma_0} G(\eta)  \int_{\Gamma_0} \left( \int_{\Gamma_0}
e(t_y;\zeta) R^{\Lambda,N}_t(\eta\cup \xi \cup \zeta) \lambda ( d
\zeta) \right)
\lambda(d\eta) \lambda ( d \xi) \qquad \nonumber \\[.2cm]
& & \quad =  \int_{\Gamma_0} G(\eta) \int_{\Gamma_0}
\left( \sum_{\zeta\subset \xi} e(t_y;\zeta)\right)
R^{\Lambda,N}_t(\eta \cup \xi) \lambda(d\eta)
\lambda ( d \xi).\nonumber \qquad
\end{eqnarray}
By (\ref{12}) we have that
\begin{equation*}
 \sum_{\zeta \subset \xi} e(t_y;\zeta) = e(\tau_y;\xi).
\end{equation*}
We apply this in the last line of (\ref{78b}) and obtain
\begin{eqnarray}
  \label{78c}
& & \langle \! \langle G, Q_y k^{\Lambda,N}_t \rangle \! \rangle
\\[.2cm] & & \quad =  \int_{\Gamma_0} G(\eta) \int_{\Gamma_0}
 e(\tau_y;\xi)
R^{\Lambda,N}_t(\eta\cup \xi) \lambda(d\eta)
\lambda ( d \xi)\nonumber \qquad \\[.2cm]
& & \quad \leq \int_{\Gamma_0} G(\eta) \int_{\Gamma_0}
 R^{\Lambda,N}_t(\eta\cup \xi) \lambda(d\eta)
\lambda ( d \xi)\nonumber \qquad \\[.2cm] & & \quad = \langle \! \langle G,  k^{\Lambda,N}_t \rangle \!
\rangle, \nonumber
\end{eqnarray}
which after the limiting transition as in (\ref{71}) yields
(\ref{76}) for $j=1$. For the same $G$, we set $\bar{G} =
e(\tau_y;\cdot) G$. Then by (\ref{12}) and the second line in
(\ref{78c}) we get
\begin{equation*}
\langle \! \langle \bar{G}, Q_y k^{\Lambda,N}_t \rangle \! \rangle
\leq \langle \! \langle G, Q_y k^{\Lambda,N}_t \rangle \! \rangle,
\end{equation*}
which after the limiting transition as in (\ref{71}) yields
(\ref{76}) for $j=2$.
\end{proof}
\begin{lemma}
  \label{J2lm}
Let $C>0$ be such that the initial condition in (\ref{24c})
satisfies $k_{\mu_0}(\eta) =k_0 (\eta) \leq C^{|\eta|}$.
Then for all $t< T (\vartheta, \vartheta^*)$ with $\vartheta^* =
\log C$ and any $\vartheta> \vartheta^*$, the unique solution of
(\ref{24c}) given by the formula
\begin{equation}
  \label{Sol}
k_t = S_{\vartheta \vartheta^*} (t) k_0
\end{equation}
satisfies (\ref{24d}) for $\lambda$-almost all $\eta\in \Gamma_0$.
\end{lemma}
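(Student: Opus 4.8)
The plan is to establish the two inequalities in (\ref{24d}) separately. The lower bound $k_t\geq 0$ is immediate: by Lemmas \ref{3tm} and \ref{Id1lm} the solution $k_t=S_{\vartheta\vartheta^*}(t)k_0$ with $k_0\in\mathcal{K}^\star_{\vartheta^*}$ lies in $\mathcal{K}^\star_\vartheta\subset\mathcal{K}^+_\vartheta$, see (\ref{19a}) and (\ref{19b}), so $k_t(\eta)\geq 0$ for $\lambda$-almost all $\eta$. For the upper bound I would use the explicit solution of the interaction-free problem from Corollary \ref{D1co} as a \emph{super-solution}: with $\vartheta^*=\log C$ and $u_0(\eta)=C^{|\eta|}$, the function $u_t(\eta)=C^{|\eta|}\exp(t\alpha|\eta|)$ solves $\dot u_t=\bar L_\vartheta u_t$, see (\ref{55}) and (\ref{85}). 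Since for $t<T(\vartheta,\vartheta^*)$ one has $t\alpha<\tfrac12(\vartheta-\vartheta^*)<\vartheta-\vartheta^*$, the index in (\ref{De}) satisfies $\vartheta(t)=\vartheta^*+t\alpha<\vartheta$, so $u_t\in\mathcal{K}_{\vartheta(t)}\hookrightarrow\mathcal{K}_\vartheta$ on the whole interval $[0,T(\vartheta,\vartheta^*))$. As $k_0(\eta)\leq C^{|\eta|}=u_0(\eta)$, it then suffices to prove $k_t\leq u_t$ in the cone sense.

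The heart of the comparison is the pointwise operator inequality $(\bar L-L^\Delta)k_t\in\mathcal{K}^+_\vartheta$, that is $L^\Delta k_t\leq\bar L k_t$. Writing out $\bar L k_t-L^\Delta k_t$ from (\ref{55}) and (\ref{15}), the gain parts combine into $\sum_{y\in\eta}\int a(x-y)\bigl[k_t(\eta\setminus y\cup x)-e(\tau_y;\eta\setminus y\cup x)(Q_yk_t)(\eta\setminus y\cup x)\bigr]dx$, whose bracket is nonnegative for Lebesgue-almost all $y$ and $\lambda$-almost all arguments by Lemma \ref{J1lm}; the surviving loss part $\sum_{x\in\eta}\int a(x-y)e(\tau_y;\eta)(Q_yk_t)(\eta)dy$ is nonnegative because $a\geq 0$, $e(\tau_y;\cdot)\geq 0$, and $Q_yk_t\geq 0$, the latter following from the first equality in (\ref{78c}) tested against nonnegative $G$ (there $e(\tau_y;\xi)$ and $R^{\Lambda,N}_t$ are nonnegative). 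Hence $g_t:=(\bar L-L^\Delta)k_t\geq 0$.

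With this in hand I set $w_t:=u_t-k_t$. Using $\dot u_t=\bar L u_t$ and $\dot k_t=L^\Delta k_t$ one finds $\dot w_t=\bar L w_t+g_t$ with $w_0=u_0-k_0\geq 0$ and $g_t\geq 0$. I would then represent $w_t$ by the Duhamel formula built from the positivity-preserving family $\bar S$ of Lemma \ref{A1lm},
\[
w_t=\bar S_{\vartheta\vartheta_0}(t)w_0+\int_0^t\bar S_{\vartheta\vartheta''}(t-s)g_s\,ds,
\]
for a suitable chain $\vartheta^*=\vartheta_0<\vartheta''<\vartheta$, verify that it solves the same inhomogeneous Cauchy problem, and conclude by the uniqueness argument of Lemma \ref{3tm} that it equals $w_t$. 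Since each $\bar L_{\vartheta\vartheta''}$ maps $\mathcal{K}^+$ into $\mathcal{K}^+$ by (\ref{58}), every term of the series (\ref{61}) preserves positivity, so $\bar S_{\cdot\cdot}(t)$ maps the closed cone into itself; together with $w_0\geq 0$ and $g_s\geq 0$ this forces $w_t\geq 0$, i.e. $k_t(\eta)\leq u_t(\eta)=C^{|\eta|}\exp(t\alpha|\eta|)$, which is (\ref{24d}).

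The step I expect to be the main obstacle is making this Duhamel representation rigorous within the ascending scale (\ref{19}): the operators are unbounded and act between different $\mathcal{K}_\vartheta$, and both $k_s$ and $g_s$ sit in spaces whose index drifts with $s$, so one must fix the intermediate indices and check that $t<T(\vartheta,\vartheta^*)$ keeps the horizons $\bar T(\cdot,\cdot)$ of Lemma \ref{A1lm} large enough for the series and the $s$-integral to converge and be continuous in $\mathcal{K}_\vartheta$. The positivity conclusion itself then needs only that $\mathcal{K}^+_\vartheta$ is closed, so that $w_t\geq 0$ passes to the limit of the partial sums and of the Riemann sums of the integral.
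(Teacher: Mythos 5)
Your proposal is correct and follows essentially the same route as the paper: the upper bound is obtained by comparing $k_t$ with the free-jump super-solution $u_t$ of Corollary \ref{D1co} via a Duhamel representation built from the positivity-preserving family $\bar S$ of Lemma \ref{A1lm}, with the key positivity $(\bar L - L^\Delta)k_t \in \mathcal{K}^+$ supplied by Lemma \ref{J1lm} (gain part) and by $Q_y k_t \ge 0$ (loss part), exactly as in the paper's identity (\ref{83}). The index bookkeeping you flag as the main obstacle is handled in the paper by the chain $\vartheta^* < \vartheta^1 < \vartheta^2 < \vartheta^3$ with $t < T(\vartheta^1,\vartheta^*)$ and $t < \bar T(\vartheta^3,\vartheta^2)$, which is precisely the fix you anticipate.
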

\begin{proof}
Take any $\vartheta > \vartheta^*$ and fix $t< T (\vartheta,
\vartheta^*)$; then pick $\vartheta^1 \in (\vartheta^* , \vartheta)$
such that $t< T (\vartheta^1, \vartheta^*)$. Next take $\vartheta^2,
\vartheta^3\in \mathds{R}$ such that $\vartheta^1<\vartheta^2 <
\vartheta^3$ and $t< \bar{T} (\vartheta^3, \vartheta^2)$. The latter
is possible since $\bar{T}$ depends only on the difference
$\vartheta_3 - \vartheta_2$, see (\ref{59}). For the fixed $t$, $k_t
\in \mathcal{K}_{\vartheta^1}^\star \hookrightarrow
\mathcal{K}_{\vartheta^3}^\star$, and hence one can write
\begin{eqnarray}
 \label{83}
 u_t & = &
\bar{S}_{\vartheta^3 \vartheta^*}(t) u_0 \\[.2cm]
 & = & (u_0 - k_0)+ k_t + \int_0^t \bar{S}_{\vartheta^3 \vartheta^2} (t-s)
 D_{\vartheta^2\vartheta^1} k_s ds, \nonumber
\end{eqnarray}
where
\begin{equation*}
  D_{\vartheta \vartheta''} = \bar{L}_{\vartheta \vartheta''} -
L^\Delta_{\vartheta \vartheta''}, \qquad    D_\vartheta =
\bar{L}_\vartheta - L^\Delta_\vartheta,
\end{equation*}
and the latter two operators are as in (\ref{60}) and (\ref{24c})
respectively. By Lemma \ref{Id1lm}, for $s\leq t$, $k_s \in
\mathcal{K}_{\vartheta^1}^\star$. By (\ref{15}), (\ref{55}), and
Lemma \ref{J1lm} we have that $D_{\vartheta^2\vartheta^1}:
\mathcal{K}_{\vartheta^1}^\star \to \mathcal{K}_{\vartheta^2}^+$.
Then by Lemma \ref{A1lm} the third summand in the second line in
(\ref{83}) is in $\mathcal{K}_{\vartheta^3}^+$ which completes the
proof since $u_0 - k_0$ is also positive.
\end{proof}
\vskip.1cm \noindent {\it Proof of Theorem \ref{1tm}.} According to
Definition \ref{S1df} and Remark \ref{D1rk} the map $[0,+\infty)\ni
t \mapsto k_t \in \mathcal{K}^\star$ is the solution in question if:
(a) $k_t(\emptyset)=1$; (b) for each $t>0$, there exists
$\vartheta''\in \mathds{R}$ such that $k_t \in
\mathcal{K}_{\vartheta''}$ and  $\frac{d}{dt} k_t =
L^\Delta_\vartheta k_t$ for each $\vartheta> \vartheta''$.

Let $k_0$ and $C>0$ be as in the statement of Theorem \ref{1tm}. Set
$\vartheta^*= \log C$. Then, for $\vartheta = \vartheta^* +
\delta(\vartheta^*)$, see (\ref{N1}) and (\ref{N2}), $k_t$ as given
in (\ref{Sol}) is a unique solution of (\ref{24c}) in
$\mathcal{K}_\vartheta$ on the time interval $[0,T(\vartheta,
\vartheta^*))$. By (\ref{15}) we have
\[
\left(\frac{d}{dt} k_t\right)(\emptyset) = (L^\Delta
k_t)(\emptyset) =0,
\]
which yields that $k_t(\emptyset) = k_0(\emptyset) =1$. By Lemma \ref{Id1lm} $k_t \in
\mathcal{K}_\vartheta^\star$, and hence $k_t$ is the solution in
question for $t< \tau( \vartheta^*)$.
 According to Lemma \ref{J2lm} $k_t$ lies in
$\mathcal{K}_{\vartheta(t)}$ with $\vartheta (t)$ given in
(\ref{De}). Fix any $\epsilon \in (0,1)$ and then set $s_0=0$, $s_1
=(1-\epsilon) \tau(\vartheta^*)$, and $\vartheta^{*}_1 = \vartheta
(s_1)$. Thereafter, set  $\vartheta^1 = \vartheta^{*}_{1} +
\delta(\vartheta^{*}_{1})$ and
\begin{equation*}
 k_{t+  s_1} = S_{\vartheta^1 \vartheta^{*}_{1}}(t)k_{s_1}, \qquad t \in [0, \tau( \vartheta^{*}_{1})).
\end{equation*}
Note that for $t$ such that $t+s_1 < \tau(\vartheta^*)$,
\[
 k_{t+  s_1} = S_{\vartheta^1 \vartheta^{*}}(t+s_1)k_{0},
\]
see (\ref{D1}). Thus, by Lemmas \ref{Id1lm} and \ref{J2lm} the map
$[0, s_1 + \tau(\vartheta^{*}_{1})) \ni t \mapsto k_t \in
\mathcal{K}_{\vartheta(t)}$ with
\begin{equation*}
k_t = \left\{ \begin{array}{ll} S_{\vartheta^{*}_1 \vartheta^{*}}(t)
k_0
\quad &t\leq s_1;\\[.3cm] S_{\vartheta^1 \vartheta^{*}_{1}}(t-s_1)
k_{s_1} \quad &t\in [s_1, s_1+ \tau( \vartheta^{*}_{1}))
\end{array} \right.
\end{equation*}
is the solution in question on the indicated time interval. We
continue this procedure by setting $s_n =(1-\epsilon)
\tau(\vartheta^*_{n-1})$, $n\geq 2$, and then
\begin{equation}
  \label{N4}
\vartheta^*_n = \vartheta(s_1+\cdots +s_n ), \qquad \vartheta^n =
\vartheta^*_n + \delta(\vartheta^*_n).
\end{equation}
This yields the solution in question on the time interval $[0, s_1 +
\cdots + s_{n+1}]$ which for $t \in [s_1 +\cdots + s_l, s_1 +\cdots
+ s_{l+1}]$, $l=0, \dots , n$, is given by
\[
k_t =  S_{\vartheta^l \vartheta^*_{l}}(t-(s_1 +\cdots +s_l))
k_{s_l}.
\]
Then the global solution in question exists whenever the series
\[
\sum_{n\geq 1}s_n = (1-\epsilon) \sum_{n\geq 1} \tau(\vartheta^*_n)
\]
diverges. Assume that this is not the case. Then by (\ref{De}) and
(\ref{N4}) we get that both (a) and (b) ought to be true, where (a)
$\sup_{n\geq 1} \vartheta^*_n =: \bar{\vartheta}<+ \infty$ and (b)
$\tau(\vartheta^*_n) \to 0$ as $n\to +\infty$. However, by
(\ref{N1}) and (\ref{N2}) it follows that (a) implies
$\tau(\vartheta^*_n) \geq \tau(\bar{\vartheta}) > 0$, which
contradicts (b). \hskip2.5cm $\square$

\section*{Acknowledgment}
The present research was supported by the European Commission under
the project STREVCOMS PIRSES-2013-612669.

\end{document}